\tikzstyle{overbrace text style}=[font=\tiny, above, pos=.5, yshift=5pt]
\tikzstyle{overbrace style}=[decorate,decoration={brace,raise=5pt,amplitude=3pt}]
\newcommand{\shortcite}[1]{(\citeyear{#1})}
\newcommand{\EFXr}{EFXr\xspace}
\newcommand{\NP}{\textsf{NP}\xspace}
\newcommand{\NPh}{\NP-hard\xspace}
\newcommand{\NPc}{\NP-complete\xspace}
\newcommand{\PLS}{\textsf{PLS}\xspace}
\newcommand{\partition}{\textsf{PARTITION}\xspace}
\newcommand{\width}{edge-cut width}
\newcommand{\supwidth}{super edge-cut width}
\newcommand{\bigoh}{\mathcal{O}}    
\newcommand{\ecw}{\widthshort}
\newcommand{\widthshort}{\operatorname{ecw}}
\newcommand {\supwidthshort}{\operatorname{sec}}
\newcommand{\loc}{\operatorname{loc}}
\newcommand{\items}{\ensuremath{A}}
\newcommand{\Oh}[1]{{\ensuremath{\mathcal{O}\left(#1\right)}}}
\newtheorem{theorem}{Theorem}
\newtheorem{observation}{Observation}
\newtheorem{corollary}{Corollary}
\newtheorem{definition}{Definition}
\newtheorem{lemma}[theorem]{Lemma}
\newtheorem{claim}{Claim}
\newif\iflong
\newif\ifshort
\title{EF1 and EFX Orientations}
\author[1]{Argyrios Deligkas}
\author[1]{Eduard Eiben}
\author[1]{Tiger-Lily Goldsmith}
\author[2]{Viktoriia Korchemna}
\affil[1]{Royal Holloway University of London, United Kingdom}
\affil[2]{TU Wien}
\begin{document}

\maketitle

\begin{abstract}
We study the problem of finding fair allocations -- EF1 and EFX -- of indivisible goods with orientations. 
In an orientation, every agent gets items from their own predetermined set.
For EF1, we show that EF1 orientations always exist when agents have monotone valuations, via a pseudopolynomial-time algorithm. 
This surprisingly positive result is the main contribution of our paper.
We complement this result with a comprehensive set of scenarios where our algorithm, or a slight modification of it, finds an EF1 orientation in polynomial time.
For EFX, we focus on the recently proposed graph instances, where every agent corresponds to a vertex on a graph and their allowed set of items consists of the edges incident to their vertex.
It was shown that finding an EFX orientation is NP-complete in general.
We prove that it remains intractable even when the graph has a vertex cover of size 8, or when we have a multigraph with only 10 vertices. 
We essentially match these strong negative results with a fixed-parameter tractable algorithm that is virtually the best someone could hope for.
\end{abstract}

%

\section{Introduction}
\label{sec:intro}
The allocation of a set of {\em indivisible} goods to a set of agents in a way that is considered to be {\em ``fair''} is a problem that has been studied since ancient times. 
Since {\em envy free} allocations -- no agent prefers the bundle of any other agent over their own -- for indivisible goods are not always guaranteed to exist, in recent decades mathematicians, economists, and computer scientists formally studied the problem and have proposed several different fairness solution concepts~\citep{LiptonMMS04,BouveretL08,Budish11,CaragiannisKMPSW19}. 

Arguably, EF1 and EFX are the two solution concepts that have received the majority of attention in the literature and have created a long stream of work.
An allocation is EF1, if it is envy-free up to one good, i.e., any envy from one agent $i$ to some agent $j$ is eliminated by removing a specific item from the bundle of agent $j$. 
On the other hand, an allocation is EFX if it is envy-free up to {\em any} good, i.e., any envy towards agent $j$ is eliminated by removing {\em any} item from $j$'s bundle.
While EFX is a stronger fairness notion, it is unknown whether it always exists; this is one of the main open problems in fair division. On the other hand, EF1 allocations are always guaranteed to exist and in fact, we can efficiently compute such an allocation via the envy-cycle elimination algorithm~\citep{LiptonMMS04}.

However, both EF1 and EFX allow for allocations that can be considered ``counterintuitive'' in the best case, or {\em wasteful} in the worst. 
Consider for example the case where we have two agents, $X$ and $Y$, and three items, $a, b$, and $c$. 
The valuations of $X$ for $a,b,c$ are $1, 1, 0.2$ respectively, while the valuations of $Y$ are $0,1,0$. 
Observe now that the allocation that gives $X$ item $a$ and $Y$ items $b$ and $c$ is both EFX and EF1.
Still, giving item $c$ to agent $Y$ seems rather unreasonable since item $c$ is ``irrelevant'' to agent $Y$! 
Luckily for us, this issue can be fixed by giving item $c$ to $X$ instead. But is such a ``fix'' always possible? 
In other words, does a fair allocation always exist under the constraint that every agent gets goods from a restricted set, i.e., a subset of goods that they approve? This is the question we answer in this paper. 

Our work is inspired by the recent paper of \cite{christodoulou2023fair} that studies valuations on graphs. In that model, an instance of the problem is represented via a graph whose vertices correspond to agents with additive utilities and the edges correspond to goods. 
There, each agent had positive utility only for the goods that corresponded to edges incident to their vertex, i.e., only those goods were {\em relevant} to them. The value of an agent for every other good, non-incident to their vertex, was zero. 

\cite{christodoulou2023fair} studied the existence and complexity of finding EFX allocations and EFX {\em orientations}. An orientation is an allocation where every agent gets {\em only} edges adjacent to them, i.e., every edge is ``oriented'' towards the incident agent that gets it.
\cite{christodoulou2023fair} showed something really interesting. They have shown that albeit EFX allocations always exist for this model and they can be computed in polynomial time, 
EFX orientations fail to exist and in fact, the corresponding problem is \NPc even for binary, additive and symmetric valuations for the agents.

\subsection{Our contribution}
Our contribution is twofold: (a) we initiate the study of EF1 orientations; (b) we examine  EFX orientations through the lens of parameterized complexity.

Our main result is to prove that an EF1 orientation always exists when the valuations of the agents are monotone!
In fact, we prove our result for a more general model than the one from~\citeauthor{christodoulou2023fair}~\shortcite{christodoulou2023fair}, where instead of graphs, we consider hypergraphs, i.e., the goods now correspond to hyperedges. In other words, each agent has a subset of goods that are {\em relevant} to them.
We prove our result algorithmically (Theorem~\ref{thm: EF1_exist}). 
The base of our algorithm is the well-known envy-cycle elimination algorithm~\citep{LiptonMMS04}, although it requires two careful modifications to indeed produce an orientation.
The first modification is almost straightforward: every item is allocated to an agent that is incident to it. 
The second modification is required after we swap the bundles of some agents when we resolve an envy cycle. After the swap, an agent might get goods that are not relevant to them. 
If this is the case, we {\em remove} any irrelevant items from all the bundles of the partial allocation and we redistribute them. However, a priori it is not clear whether this procedure will ever terminate. As we prove via a potential argument, the procedure indeed terminates, albeit in pseudo-polynomial time.

Then, we derive polynomial-time bounds for several different valuation classes as direct corollaries of our main theorem, or via a slight modification of the algorithm. Namely, our base algorithm finds an EF1 orientation in polynomial time if every agent has a constant number of relevant items (Corollary~\ref{cor:relevant}), or when there exists a constant number of ``local'' item-types (Corollary~\ref{cor:local}). 
In addition, via straightforward modifications of the base algorithm, we can efficiently compute EF1 orientations for identical valuations (Theorem~\ref{thm: EF1-identicalvals}), or when the relevant items of the agents form {\em laminar} sets (Corollary~\ref{cor:laminar}). 

For EFX orientations, we begin by showing two strong negative results. Firstly, we show that it is \NPc to decide whether an EFX orientation exists even on graphs with vertex cover of size $10$, even when the valuations are additive and symmetric (Theorem~\ref{thm: EFX-NP_hard}). This result rules out the possibility of fixed-parameter algorithms for a large number of graph parameters. 
Furthermore, we show that if we consider multigraphs instead of graphs, i.e., we allow parallel edges, finding an EFX allocation is \NPh even when we have $8$ agents with symmetric and additive valuations (Theorem~\ref{thm: EFX-multigraph}). 
We complement these intractability results with a fixed parameter algorithm, for which the analysis is rather involved, parameterized by the slim tree-cut width of the underlying graph; this is essentially the best result someone could hope for, given our previous results.

\ifshort
\smallskip
\noindent{\emph{Due to space constraints, some details, marked with $\star$, are omitted and are available in the supplementary material.}}
\fi

\subsection{Related Work}

The recent survey by \citeauthor{AmanatidisABFLMVW2023}~\shortcite{AmanatidisABFLMVW2023} provides a comprehensive coverage of work on fair division of indivisible goods.
In the section, we direct the reader to some other papers in particular that study EFX or EF1 and restrict the instance in different ways. 


As aforementioned, the question of whether EFX always exists is a well-known open question in Fair Division, currently, we have that \citeauthor{PlautR20}~\shortcite{PlautR20} prove EFX always exists for $2$ agents. For $3$ agents already this question is much harder, \citeauthor{chaudhury2024efx}~\shortcite{chaudhury2024efx} prove that EFX exists for $3$ agents but with additive valuations, and recently \citeauthor{akrami2022efx}~\shortcite{akrami2022efx} generalize this result so that only $1$ of these agents requires additive valuations (and the other $2$ agents may have arbitrary valuations).  
The paper by \citeauthor{goldberg2023frontier}~\shortcite{goldberg2023frontier} studies the intractability of EFX with just two agents. They find that even with a small instance like this, it quickly becomes intractable as the valuations become more general -- namely computing an EFX allocation for two identical agents with submodular valuations is \PLS-hard. However, they propose an intuitive greedy algorithm for EFX allocations for weakly well-layered valuations; a class of valuations which they introduce. 
An example of relaxation of EFX that has been studied is EFkX, envy freeness up to $k$ goods, \citeauthor{ijcai2022p3}~\shortcite{ijcai2022p3} study EF2X and prove existence for agents with additive valuations (and some other minor restrictions).
A recent paper by \citeauthor{zhou2024EFX}~\shortcite{zhou2024EFX} studies EFX allocations in the mixed setting on graphs, where agents only have valuations for edges adjacent to them and these can be positive or negative. 
They treat orientations as a special case of their problem and show that deciding if an EFX orientation exists is \NPc.
The paper by \citeauthor{payan2023relaxations}~\shortcite{payan2023relaxations} also studies graph restrictions but these are subtly different to that of \cite{christodoulou2023fair}. In this model, edges are not items but instead, they are where EFX (/other fairness notions) must apply, intuitively this aims to capture a model where we want envy freeness between an agent and some of their neighbors. 
Some studies look at EFX where we (may) decide to leave some items unallocated. We refer to this as EFX with charity, \citep{caragiannis2019envy,chaudhury2021little}, where not all items are allocated and these leftover items are said to be ``given away to charity''. 
Moreover, \cite{CaragiannisKMPSW19} introduce EFX0 and \cite{kyropoulou2020almost} study this.
An allocation satisfies EFX0 if for one agent $i$ they are not envious of any other agent $j$'s bundle after removing any item for which agent $i$ doesn't have a positive value, i.e. EFX but we exclude items which have no value/zero value to agent $i$. 
Moreover, another model which may be of interest is that of connected fair division - originally introduced by \cite{bouveret2017fair} and more recently \cite{DeligkasEGHO2021} study this under the lens of parameterized complexity - in which there are some items which cannot be separated i.e. have some \textit{connectivity} constraints.

\section{Preliminaries}
\label{sec:prelims}
Throughout the paper we consider 
 $\items=\{a_1, a_2, \ldots, a_m\}$ to be a set of indivisible items and $N=\{1, 2, \ldots, n\}$ to be a set of agents. 
An allocation $\pi = (\pi_1, \pi_2, \ldots, \pi_n)$ is a partition of the items into $n$ (possibly empty) sets which we refer to as bundles. Thus, $\pi_i \cap \pi_j=\emptyset$ for every $i \neq j$ and $\bigcup_{i\in N}\pi_i = \items$, where the bundle $\pi_i$ is allocated to agent $i$. For an item $a \in \items$, we denote by $\pi(a)$ the agent who receives item $a$ in the allocation $\pi$. We refer to an allocation of a subset of items as a partial allocation.

Every agent $i \in N$ has a {\em valuation} function $\mathcal V_i$ that assigns a value $\mathcal V_i(S)$ for every subset $S \subseteq \items$, where $\mathcal V_i(\emptyset)=0$. 
$\mathcal V_i$ is {\em non-negative} if for all $S \subseteq N$ it holds $\mathcal V_i(S) \geq 0$; $\mathcal V_i$ is {\em monotone} if for all $S' \subset S$ it holds $\mathcal V_i(S') \leq \mathcal V_i(S)$; $\mathcal V_i$ is {\em additive} if there exist non-negative values $v_{i1}, v_{i2}, \ldots, v_{im}$ such that for every $S \subseteq \items$ it holds $\mathcal V_i(S) = \sum_{j \in S}v_{ij}$.


For an allocation $\pi$ and we say that agent~$i$ \emph{envies} agent $j$, or alternatively that there is \emph{envy from $i$ towards~$j$}, if $\mathcal V_i(\pi_j) > \mathcal V_i(\pi_i)$. 
An allocation is {\em fair} if envy can be eliminated in some particular way.

\iflong
\begin{definition}[EF]
An allocation $\pi$ is envy-free (EF), if for every pair of agents $i,j \in N$ it holds that $\mathcal V_i(\pi_i) \geq \mathcal V_i(\pi_j)$.
\end{definition}
\fi

\begin{definition}[EF1]
An allocation $\pi$ is envy-free up to one item (EF1), if for every pair of agents $i,j \in N$ there {\em exists} an item $a\in \pi_j$ such that $\mathcal V_i(\pi_i) \geq \mathcal V_i(\pi_j\setminus a)$.
\end{definition}

\begin{definition}[EFX]
An allocation $\pi$ is envy-free up to any item (EFX), if for every pair of agents $i,j \in N$ and {\em every} $a\in \pi_j$ it holds that $\mathcal V_i(\pi_i) \geq \mathcal V_i(\pi_j\setminus a)$.
\end{definition}

\paragraph{\bf Relevant items.}
We say that an item $a$ is \textit{relevant} for an agent $i$ if there is a set $S$ of items such that $\mathcal V_i(S\setminus {a}) < \mathcal V_i(S)$.  
For every agent $i$, we will denote the set of items relevant to $i$ by $A_i$. 
Similarly, for every item $a$, we let $N_a = \{i\in N\mid a\in A_i\}$ be the set of agents to which $a$ is relevant, we will also call $N_a$ the \emph{agent list} of $a$. 
We say that items $a$ and $b$ belong to the same \emph{group} if $N_a=N_b$.
Throughout the paper we assume that the union of all relevant sets is the set of items or, in other words, every item is relevant for at least one agent.

\paragraph{\bf Orientations.}
Using relevant items, we can define a subset of all possible allocations, that we call {\em orientations}. Formally, an allocation $\pi = (\pi_1, \pi_2, \ldots, \pi_n)$ is called an {\em orientation} if $\pi_i\subseteq A_i$ for all $i\in N$. In other words, in an orientation, the bundle of an agent contains {\em only} relevant items.
For $\phi \in \{ \text{EF1, EFX} \}$, we say that an allocation $\pi$ is a $\phi$ orientation if it is an orientation and in addition, it satisfies the corresponding fairness definition.


\section{EF1 orientations for monotone valuations}
\label{sec:EF1-orientations}

In this section we establish the existence of EF1 orientations when agents have monotone valuations via the construction of a pseudopolynomial-time algorithm and we identify several sub-classes of valuation functions for the agents where our algorithm, or a slight modification of it, finds an EF1 orientation in polynomial time.

\begin{algorithm}[b]
    \caption{EF1 orientations for monotone valuations}
    \label{alg:algorithm}
    \begin{algorithmic}[1] 
    \REQUIRE Set of items $A$, set of agents $N$, valuations $\mathcal{V}_i$, sets of relevant items $A_i$, $i\in N$, agent lists $N_a$, $a\in A$
    \ENSURE EF1 orientation $\pi$
        \STATE Let $\pi= (\pi_1,\ldots, \pi_n)$ such that $\pi_i=\emptyset$ for all $i\in N$.
        \STATE Let $G_{\pi} = (N, \emptyset)$ be a directed graph ( an ``envy-graph'' of $\pi$)
        \WHILE{$\exists$ item $a$ that is not in any $\pi_i$}
        \STATE Let $i\in N_a$ be an agent such that $i$ is a source-vertex in $G[N_a]$.
        \STATE Let $\pi_i = \pi_i\cup \{a\}$
        \FOR{$j\in N_a\setminus\{i\}$}
        \IF{$\mathcal V_j(\pi_j) < V_j(\pi_i)$}
        \STATE Add the edge from $j$ to $i$ if it does not exists.
        \ENDIF
        \ENDFOR
        \STATE Call Algorithm~\ref{alg:resolve_cycle} with $\pi$ and $G_{\pi}$ to eliminate all cycles in $G_{\pi}$.

        \ENDWHILE
        \STATE \textbf{return} $\pi$
    \end{algorithmic}
\end{algorithm}

\begin{algorithm}[h]
    \caption{Eliminating cycles in the envy-graph}
    \label{alg:resolve_cycle}
    \begin{algorithmic}[1] 
    \REQUIRE partial allocation $\pi$ with an envy-graph $G_{\pi}$
    \ENSURE partial allocation $\pi'$ with $\mathcal{V}_i(\pi'_i)\ge \mathcal{V}_i(\pi_i),i \in N$, such that $G_{\pi'}$ does not contain directed cycles
    \WHILE{ $\exists$ a directed cycle $C=(i_1, i_2, \ldots, i_{\ell})$ in $G_{\pi}$}
    \FORALL{$j\in [\ell-1]$}
    \STATE Let $\pi'_{i_j} = \pi_{i_{j+1}}\cap A_{i_j}$
    \ENDFOR
    \STATE $\pi'_{i_{\ell}} = \pi_{i_{0}}\cap S_{i_\ell}$
    \STATE Let $\pi = \pi'$
    \STATE Recompute $G_{\pi}$
    \ENDWHILE
\RETURN $\pi$
    \end{algorithmic}
\end{algorithm}

\begin{theorem}
\label{thm: EF1_exist}
 When agents have monotone valuations, an EF1 orientation always exists and can be computed in time $\Oh{mn^3r}$ where $r$ is the maximal range size of $\mathcal V_i$, $i\in N$.  
\end{theorem}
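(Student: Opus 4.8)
The plan is to run Algorithm~\ref{alg:algorithm} and verify three things: that it maintains an orientation, that it maintains EF1 as a loop invariant, and that it terminates within the claimed time. I would organise the argument around invariants that hold at the start of every iteration of the outer \textbf{while} loop, namely that $\pi$ is a partial orientation (each $\pi_i\subseteq A_i$), that $G_\pi$ is the exact envy-graph of $\pi$ and is acyclic, and that $\pi$ is EF1. Termination then gives a complete orientation with an acyclic envy-graph, which together with the EF1 invariant yields the theorem. The orientation invariant is immediate: line~5 only ever assigns $a$ to an agent of $N_a$, and in the cycle-resolution step each reassigned bundle is explicitly intersected with the recipient's relevant set $A_{i_j}$, so no agent ever holds an irrelevant item.

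Next I would handle the EF1 invariant for the allocation step. Before the item $a$ is handed out, $i$ is a source of $G[N_a]$, so no agent of $N_a$ envies $i$; and every agent outside $N_a$ is indifferent to $a$, because $a\notin A_j$ means $\mathcal V_j(\pi_i\cup\{a\})=\mathcal V_j(\pi_i)$ (for monotone valuations an irrelevant item never changes any value). Hence adding $a$ to $\pi_i$ can only create envy towards $i$ from agents in $N_a$, and each such envy is killed by removing the single item $a$, since $\mathcal V_j((\pi_i\cup\{a\})\setminus\{a\})=\mathcal V_j(\pi_i)\le \mathcal V_j(\pi_j)$. All other pairwise comparisons are unchanged, so EF1 is preserved and the newly added edges keep $G_\pi$ equal to the true envy-graph.

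The delicate step, and the one I expect to be the main obstacle, is showing that Algorithm~\ref{alg:resolve_cycle} preserves EF1 even though it \emph{drops} the items irrelevant to the new owner. The clean way to do this is to compare the true new allocation $\pi^{\mathrm{new}}$ against the hypothetical allocation $\hat\pi$ obtained by the same cyclic rotation but \emph{without} dropping anything. For $\hat\pi$ the standard envy-cycle argument applies verbatim: each agent on the cycle receives a bundle it previously envied, so every own-value weakly increases, the multiset of bundles is only permuted, and EF1 carries over from $\pi$. The point is then that dropping items irrelevant to $i_j$ does not change $i_j$'s own value, so $\mathcal V_p(\pi^{\mathrm{new}}_p)=\mathcal V_p(\hat\pi_p)$ for every $p$, while $\pi^{\mathrm{new}}_q\subseteq\hat\pi_q$ makes every target bundle only smaller under monotonicity. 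Splitting on whether the EF1-witness item $b^\ast$ for $\hat\pi$ survives the drop, one sees that if $b^\ast\in\pi^{\mathrm{new}}_q$ the same witness works, and if $b^\ast$ was dropped then $\pi^{\mathrm{new}}_q\subseteq\hat\pi_q\setminus\{b^\ast\}$ already removes all of $p$'s envy towards $q$; either way EF1 holds for $\pi^{\mathrm{new}}$. Recomputing $G_\pi$ afterwards restores the envy-graph invariant, and the loop exits with an acyclic $G_\pi$.

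Finally I would close termination and running time with a potential argument, which is the part that is not obvious a priori because dropped items re-enter the pool. The key monotonicity is that \emph{every} agent's own value $\mathcal V_i(\pi_i)$ is non-decreasing over the entire run: allocation steps raise it by monotonicity, and a rotation replaces each cycle agent's bundle by one of strictly larger value while leaving the others untouched. Since $\mathcal V_i$ takes at most $r$ distinct values, each agent can strictly increase at most $r-1$ times; as every rotation strictly increases the value of every agent on its (length-$\ge 2$) cycle, there are at most $\Oh{nr}$ rotations in total. Each rotation drops at most $m$ items, so at most $\Oh{mnr}$ items are ever (re)allocated, bounding the number of outer iterations by $\Oh{mnr}$. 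Charging $\Oh{n^2}$ to each iteration for locating a source in $G[N_a]$ and updating the envy edges (treating each valuation query as unit cost), and $\Oh{n^2}$ to each of the $\Oh{nr}$ rotations for recomputing $G_\pi$, gives the claimed $\Oh{mn^3r}$ bound. A source always exists in line~4 because $G_\pi$ is acyclic at the top of each iteration, so the induced subgraph $G[N_a]$ of a DAG has one, which also guarantees the algorithm never gets stuck.
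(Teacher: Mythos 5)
Your proposal is correct and follows essentially the same route as the paper's own proof: the same modified envy-cycle elimination algorithm, the same invariants (orientation, EF1, exact acyclic envy-graph), and the same potential argument on the vector of own-bundle values with the identical $\Oh{mn^3r}$ accounting. Your auxiliary ``no-drop'' allocation $\hat\pi$ is just a repackaging of the paper's direct observation that each post-rotation bundle satisfies $\pi'_j\subseteq\pi_k$, so monotonicity transfers the EF1 witness (or makes it unnecessary) exactly as in your case split.
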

\begin{proof}
    A full pseudo-code of the algorithm is given in Algorithm~\ref{alg:algorithm}. The algorithm is inspired by the envy-cycle elimination algorithm by \citeauthor{LiptonMMS04}~\shortcite{LiptonMMS04}. Similarly to that algorithm, we are computing the allocation $\pi$ iteratively starting from an empty (partial) allocation keeping an envy graph $G_{\pi}$, which is a graph whose vertex set is precisely the set of agents $N$ and there is a directed edge from $i$ to $j$ if in the allocation $\pi$ the agent $i$ envies the agent $j$. In addition, we will be preserving that $\pi$ is an EF1 (partial) allocation such that $\pi_i\subseteq A_i$ for all $i\in N$. For the ease of the presentation of the proof, we will say that a pair of agents $i, j$ satisfy EF1-property if $\mathcal{V}_i(\pi_j) \le \mathcal{V}_i(\pi_i) $ or there exists $a\in \pi_j$ such that $\mathcal{V}_i(\pi_j\setminus \{a\}) \le \mathcal{V}_i(\pi_i)$.
    
    While the algorithm by \cite{LiptonMMS04} greedily picks any source-vertex $i$ in $G_{\pi}$ (that is a vertex without any edge pointing towards it; so no agent in $N$ envies agent $i$ its bundle in $\pi$) and allocates to $i$ an arbitrary item, this would not work for us, as the remaining items might not be relevant for the source vertices of $G_{\pi}$. Instead, we first pick an unassigned item $a$ that needs to be assigned and give it to an agent $i$ that is a source in the subgraph of $G_{\pi}$ induced by the agents in the set $N_a$ that are allowed to receive that item. Hence, the item $a$ is irrelevant for any agent $j$ that envies $i$ before allocating the item $a$. Therefore, these steps preserve both that $\pi_i\subseteq A_i$ for all $i\in N$ and that for all $i,j\in N$ the pair $i,j$ satisfies EF1-property.

    Furthermore, in order to always find a source-vertex in this induced subgraph of the envy graph, we also need to be able to eliminate cycles in $G_{\pi}$. An algorithm for eliminating all cycles in $G_{\pi}$ is described in Algorithm~\ref{alg:resolve_cycle}. Let $C = (i_1, i_2, \ldots, i_\ell)$ be a cycle such that the agent $i_j$ envies the bundle of the agent $i_{j+1}$ for $j\in [\ell-1]$ and the agent $i_\ell$ envies the bundle of the agent $i_1$. Similarly to \citeauthor{LiptonMMS04}, we shift the bundles in the opposite direction of the cycle, which is known to eliminate the envy on the cycle and does not create any new envy. However, after we execute this shift of bundles, the bundle of an agent $i$ on the cycle could contain some items that are not in their set $A_i$ of the relevant items. So we remove from the bundle of the agent $i$ any item that is not relevant for them. Note that this does not change the valuation of $i$ for its bundle. Let us denote by $\pi$ the allocation before the cycle-elimination step and by $\pi'$ the allocation after the cycle-elimination step. As we already discussed, for all $i\in N$ we have $\mathcal{V}_i(\pi'_i) \ge \mathcal{V}_i(\pi_i)$ (the value either increased or the bundle did not change). Now let $i,j\in N$.     
    We know that for all $j'\in N$, the pair $i,j'$ satisfies EF1-property in $\pi$. It follows that if $\pi'_j = \pi_j$, then $i,j$ satisfies EF1-property as well. Else $\pi'_j \subseteq \pi_{k}$ for some $k\in N$. It follows from monotonicity of the valuation that $\mathcal{V}_i(\pi'_j)\le \mathcal{V}_i(\pi_k)$ and $\mathcal{V}_i(\pi'_j\setminus a)\le \mathcal{V}_i(\pi_k\setminus a)$ for all $a\in \pi_k$ and so the pair $i,j$ also satisfies the EF1-property in $\pi'$. 
    
    It follows from the above discussion and from the fact that Algorithm~\ref{alg:algorithm} stops only when all items have been allocated that whenever Algorithm~\ref{alg:algorithm} stops it returns an EF1 orientation. It only remains to show that the algorithm 
    terminates. 

    Let us consider the vector $W^\pi = (\mathcal{V}_1(\pi_1), \mathcal{V}_2(\pi_2)), \ldots, \mathcal{V}_n(\pi_n))$. By definition, each coordinate $W^\pi_i$ of the vector $W^\pi$ has at most $r\le 2^{m}$ many possible values. In addition, in each cycle-elimination step all the coordinates corresponding to the agents on the cycle strictly increase and the remaining coordinates of $W^{\pi}$ do not change. Similarly, adding item $a$ to agent $i$ on line 5 of Algorithm~\ref{alg:algorithm} does not decrease any of the coordinates because of the monotonicity of the valuations. Since every coordinate can increase its value only $r-1$ times, it follows that there are less than $n\cdot r$ many cycle elimination steps in total, each can be executed in $\Oh{nm}$ time. Between any two cycle-eliminations we can only add less than $m$ items after each addition of an item, we need to update $G_{\pi}$ and check whether a cycle is created, which can be easily done in $\Oh{|V(G_\pi)| +|E(G_\pi)|} = \Oh{n^2}$ time.  Putting everything together, Algorithm~\ref{alg:algorithm} runs in $\Oh{nr\cdot(nm + mn^2)} = \Oh{mn^3r)} $ time.
\end{proof}



Note that if the number of relevant items for each agent is constant and bounded by some $\ell\in \mathbb{N}$, then the number of possible bundles for each agent and hence the range $r$ of its valuation function $\mathcal{V}_i$ is bounded by $2^\ell$. 
Therefore, Theorem~\ref{thm: EF1_exist} immediately implies the following corollary.
\begin{corollary}
\label{cor:relevant}
 For the monotone valuations with at most $l$ relevant items per agent, an EF1 orientation can be computed in time $\Oh{mn^3\cdot 2^l}$.
 \end{corollary}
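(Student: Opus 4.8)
The plan is to obtain the claimed running time by bounding the range parameter $r$ that appears in Theorem~\ref{thm: EF1_exist} and then substituting this bound directly into its time complexity. Concretely, I would argue that under the hypothesis $|A_i|\le l$ for every agent $i$, each valuation function $\mathcal V_i$ can attain at most $2^l$ distinct values, so that $r\le 2^l$.

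The key step is to observe that $\mathcal V_i$ depends only on an agent's relevant items. Fix an agent $i$ and an item $a\notin A_i$. By the definition of relevance there is no set $S$ with $\mathcal V_i(S\setminus\{a\})<\mathcal V_i(S)$; combined with monotonicity, which gives $\mathcal V_i(S\setminus\{a\})\le\mathcal V_i(S)$ for every $S$, this forces the equality $\mathcal V_i(S\setminus\{a\})=\mathcal V_i(S)$ for all $S\subseteq\items$. Iterating this over all irrelevant items yields $\mathcal V_i(S)=\mathcal V_i(S\cap A_i)$ for every $S\subseteq\items$, i.e.\ the valuation factors through the restriction to $A_i$.

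It then follows that the image of $\mathcal V_i$ coincides with its image on the subsets of $A_i$. Since $|A_i|\le l$, there are at most $2^l$ such subsets, so $\mathcal V_i$ takes at most $2^l$ distinct values and hence $r\le 2^l$. Substituting $r\le 2^l$ into the bound $\Oh{mn^3 r}$ of Theorem~\ref{thm: EF1_exist} immediately gives the desired running time $\Oh{mn^3\cdot 2^l}$.

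I expect no serious obstacle here: the only point requiring any care is upgrading the mere nonexistence of a value-decreasing set (the definition of irrelevance) to exact equality for every set, which is precisely where monotonicity is used. Once the valuation is seen to depend only on the at most $l$ relevant items, the counting of distinct values and the final substitution are routine.
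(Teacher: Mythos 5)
Your proof is correct and follows essentially the same route as the paper: bound the range parameter by $r \le 2^l$ and substitute into the $\Oh{mn^3 r}$ bound of Theorem~\ref{thm: EF1_exist}. The paper treats the bound on $r$ as immediate (since the algorithm only ever forms bundles $\pi_i \subseteq A_i$, of which there are at most $2^l$), whereas you additionally verify via monotonicity that $\mathcal V_i(S)=\mathcal V_i(S\cap A_i)$ for all $S$ --- a slightly more careful justification of the same counting step, but not a different argument.
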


\subsection{Local item-types}
While in Corollary~\ref{cor:relevant}, the number of relevant items per agent is constant, our algorithm provided in Theorem \ref{thm: EF1_exist} can be applied to compute an EF1 orientation in polynomial time in much more general settings. For instance, if the range of each valuation has size at most $m^d$ for some constant $d$, the running time is upper-bounded by $\Oh{m^{d+1}n^3}$.

One natural example of such valuations is as follows. Assume that each agent subdivides all items relevant to them into $d$ groups (which we will refer to as \emph{local item-types}) and only distinguishes items that are in different groups. In this case, the valuation each agent has for their bundle depends only on the number of received items from each local item-type. Since each of the $d$ groups contains at most $m$ items, there are at most $m^d$ possibilities for their value. 

\begin{corollary}
\label{cor:local}
 For the monotone valuations with at most $d$ local item-types per agent, an EF1 orientation can be computed in time $\Oh{m^{d+1}n^3}$.
 \end{corollary}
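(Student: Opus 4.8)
The plan is to invoke Theorem~\ref{thm: EF1_exist} directly, after bounding the range size $r$ of each agent's valuation under the local item-type assumption. The only thing that genuinely needs verification is that restricting to at most $d$ local item-types forces $r = \Oh{m^d}$; once this is established, substituting into the $\Oh{mn^3r}$ bound of Theorem~\ref{thm: EF1_exist} immediately yields the claimed $\Oh{m^{d+1}n^3}$ running time.

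First I would formalize what it means for agent $i$ to have at most $d$ local item-types: the relevant set $A_i$ is partitioned into groups $G^i_1, \ldots, G^i_d$, and the value $\mathcal{V}_i(S)$ of any bundle $S \subseteq \items$ depends only on the vector of counts $(|S \cap G^i_1|, \ldots, |S \cap G^i_d|)$. In particular, two bundles containing the same number of items from each group are valued identically. This is precisely the property asserted in the paragraph preceding the statement, so no new assumption is introduced, and monotonicity of each $\mathcal{V}_i$ is retained.

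Next I would count the number of distinct values that $\mathcal{V}_i$ can attain. Since each group $G^i_k$ contains at most $m$ items, the $k$-th coordinate of the count vector ranges over $\{0, 1, \ldots, |G^i_k|\}$, giving at most $m+1$ possibilities. Hence the total number of admissible count vectors---and therefore the number of distinct values of $\mathcal{V}_i$---is at most $(m+1)^d = \Oh{m^d}$ for constant $d$. This bounds the range size $r$ of every agent's valuation by $\Oh{m^d}$.

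Finally, because local item-type valuations remain monotone, Theorem~\ref{thm: EF1_exist} applies verbatim and computes an EF1 orientation in time $\Oh{mn^3r} = \Oh{mn^3\cdot m^d} = \Oh{m^{d+1}n^3}$. I do not expect any real obstacle here: the whole content is the observation that collapsing items into $d$ types caps the number of distinguishable bundles at a polynomial in $m$, which in turn controls the range parameter $r$ driving the complexity of Theorem~\ref{thm: EF1_exist}.
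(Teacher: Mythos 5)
Your proposal is correct and follows essentially the same route as the paper: the paper also observes that a bundle's value depends only on the count vector over the $d$ groups, bounds the number of such vectors (and hence the range size $r$) by $\Oh{m^d}$, and substitutes this into the $\Oh{mn^3r}$ bound of Theorem~\ref{thm: EF1_exist}. Your version is if anything slightly more careful, writing $(m+1)^d = \Oh{m^d}$ for constant $d$ where the paper simply states ``at most $m^d$ possibilities.''
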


In general, when the range size of $\mathcal V_i$ is unbounded, we still can distinguish some settings for which the described algorithm (or its slight modification) is polynomial.

\subsection{Identical valuations}
While strictly speaking, identical valuations would mean that all items have to be relevant for all agents, we relax this notion slightly to better fit with the intended meaning of the relevant items as a restriction of the item an agent is allowed to receive.

\begin{definition} 
 A set of agents have {\em identical} valuations if there exists a function $\mathcal V$ 
 such that for every agent $i$ their valuation function is defined by $\mathcal V_i(B)=\mathcal V(B \cap A_i)$ for every $B \subseteq S$. 
\end{definition}
Similarly to the standard setting where agents with identical valuations cannot create envy cycles, we can show that the same is the case even with this relaxed definition of identical valuations and we obtain the following theorem.
\begin{theorem}
\label{thm: EF1-identicalvals}
An EF1 orientation can be computed in linear time when agents have identical monotone valuations.
\end{theorem}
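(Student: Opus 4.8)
The plan is to reuse Algorithm~\ref{alg:algorithm} and show that for identical valuations its expensive cycle-elimination subroutine (Algorithm~\ref{alg:resolve_cycle}) is never invoked, so the whole procedure reduces to a single greedy sweep over the items. To make the choice of a source in line~4 cheap, I would always hand item $a$ to an agent $i \in N_a$ minimising the common bundle value $\mathcal{V}(\pi_i)$; note that $\mathcal{V}_i(\pi_i) = \mathcal{V}(\pi_i \cap A_i) = \mathcal{V}(\pi_i)$ since $\pi_i \subseteq A_i$. The facts to verify are that such an $i$ is indeed a source of $G_\pi[N_a]$, that no envy cycle is ever created, and that the EF1 invariant is maintained.

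Both structural facts follow from the single monotonicity inequality $\mathcal{V}_i(\pi_j) = \mathcal{V}(\pi_j \cap A_i) \le \mathcal{V}(\pi_j)$, valid for every pair $i,j$ because $\pi_j \cap A_i \subseteq \pi_j$ and $\mathcal{V}$ is monotone. First, a minimiser $i$ of $\mathcal{V}(\cdot)$ within $N_a$ is a source: for any $j \in N_a$ we get $\mathcal{V}_j(\pi_i) \le \mathcal{V}(\pi_i) \le \mathcal{V}(\pi_j) = \mathcal{V}_j(\pi_j)$, so $j$ does not envy $i$. Second, every envy edge $i \to j$ forces $\mathcal{V}(\pi_i) = \mathcal{V}_i(\pi_i) < \mathcal{V}_i(\pi_j) \le \mathcal{V}(\pi_j)$, i.e.\ it strictly increases the value of the bundle it points to; a directed cycle would then yield a strict cyclic chain $\mathcal{V}(\pi_{i_1}) < \mathcal{V}(\pi_{i_2}) < \cdots < \mathcal{V}(\pi_{i_1})$, which is impossible. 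Hence $G_\pi$ stays acyclic, a source always exists in $G_\pi[N_a]$, and Algorithm~\ref{alg:resolve_cycle} never does any work.

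For EF1 I would induct on the assignments, rechecking only the pairs $(j,i)$ with $j \ne i$, since adding $a$ to $i$ leaves every other bundle untouched and cannot increase $i$'s own envy. If $a \notin A_j$, then the value $\mathcal{V}_j$ assigns to $i$'s bundle is unchanged and the previous witness still certifies EF1. If $a \in A_j$, then $j \in N_a$, so because $i$ was chosen as a source, $j$ did not envy $i$ before the assignment, i.e.\ $\mathcal{V}_j(\pi_i) \le \mathcal{V}_j(\pi_j)$; removing the freshly added $a$ gives $\mathcal{V}_j\big((\pi_i \cup \{a\}) \setminus \{a\}\big) = \mathcal{V}_j(\pi_i) \le \mathcal{V}_j(\pi_j)$, so EF1 towards $i$ holds with witness $a$. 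Thus the invariant is preserved, and since the sweep terminates once all items are placed, the output is an EF1 orientation.

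For the running time, each of the $m$ items is processed once and selecting a minimum-value agent in $N_a$ costs $\Oh{|N_a|}$, so the total is $\Oh{\sum_a |N_a|}$, which is linear in the input size, provided bundle values are maintained incrementally in $\Oh{1}$ per step. The main obstacle I anticipate is conceptual rather than technical: the relaxed identical-valuation model is asymmetric, since $\mathcal{V}_i(\pi_j)$ and $\mathcal{V}_j(\pi_i)$ generally differ, so the textbook no-envy-cycle argument does not transfer verbatim. The whole proof therefore rests on isolating the one inequality $\mathcal{V}(\pi_j \cap A_i) \le \mathcal{V}(\pi_j)$, which simultaneously makes a value-minimiser a source and forces every envy edge to climb in $\mathcal{V}$-value, thereby recovering acyclicity.
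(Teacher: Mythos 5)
Your proposal is correct and follows essentially the same route as the paper: the key step in both is that $\mathcal{V}_i(\pi_i)=\mathcal{V}(\pi_i)$ (since $\pi_i\subseteq A_i$) together with $\mathcal{V}_i(\pi_j)=\mathcal{V}(\pi_j\cap A_i)\le\mathcal{V}(\pi_j)$ forces every envy edge to strictly increase the common value $\mathcal{V}$, so the envy graph is acyclic, Algorithm~\ref{alg:resolve_cycle} never fires, and the run reduces to a single linear-time greedy sweep. Your additional observation that a $\mathcal{V}$-minimiser in $N_a$ is automatically a source (giving an $\Oh{1}$-per-item source selection) and your explicit re-verification of the EF1 invariant are harmless refinements of the same argument, the latter being inherited from the proof of Theorem~\ref{thm: EF1_exist} in the paper.
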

\begin{proof}
We will show that it is not possible to have a cycle in the envy graph. That is that when we run Algorithm~\ref{alg:algorithm}, then whenever Algorithm~\ref{alg:resolve_cycle} is called as a subroutine, the condition on line 1 in Algorithm~\ref{alg:resolve_cycle} is always false and it returns the same partial allocation. 

For the sake of a contradiction, assume that $C= (i_1, i_2, \ldots, i_\ell)$ is a cycle in the envy-graph for some partial allocation $\pi$ such that $\pi_i\subseteq A_i$. That is the agent $i_j$ envies the agent $i_{j+1}$ for all $j\in [\ell-1]$ and the agent $i_\ell$ envies the agent $i_1$. Let $\mathcal{V}$ be the function such that for all $i\in N$ and for all $B\subseteq A$ we have $\mathcal{V_i}(B) = \mathcal{V}(B\cap A_i)$. Since, $\pi_i\subseteq A_i$ for all $i\in N$ it follows that $\mathcal{V}_{i}(\pi_{i}) = \mathcal{V}(\pi_{i})$ for all $i\in N$ and $\mathcal{V}_{i}(\pi_{j}) = \mathcal{V}(\pi_{j}\cap A_i)$ and by monotonicity $\mathcal{V}_{i}(\pi_{j}) \le \mathcal{V}(\pi_{j})$. It follows that 
\[ \mathcal{V}(\pi_{i_1}) < \mathcal{V}(\pi_{i_{j+1}}) < \cdots < \mathcal{V}(\pi_{i_\ell}) < \mathcal{V}(\pi_{i_1}), \]
which is a contradiction. Therefore, Algorithm~\ref{alg:algorithm} only has to assign every item $a$ once to a source-vertex in $G_\pi[N_a]$, without ever running Algorithm~\ref{alg:resolve_cycle} which takes linear time $\bigoh(mn)$.
\end{proof}

%

\subsection{Laminar agent lists}
The final setting for which we get a polynomial time algorithm is when items are arranged in some kind of hierarchical structure, where the most common items can be assigned to any agent and then we have more and more specialized items that only a smaller and smaller group of agents can get. One can think about it like this: agents need to undergo some training and which items you are allowed to receive depends on your specialization and on the amount of training you already received.
\begin{definition}
 We say that agent lists $N_a$, $a\in A$, are \textbf{laminar} if for any two items $a_1$ and $a_2$ it holds that either $N_{a_1}\cap N_{a_2}= \emptyset$ or one of the sets is a subset of another, i.e. $N_{a_1}\subseteq N_{a_2}$ or $N_{a_2}\subseteq N_{a_1}$ 
\end{definition}

\begin{corollary}
\label{cor:laminar}
For the monotone valuations with laminar agent lists, an EF1 orientation can be computed in time $\Oh{m^2n}$.
\end{corollary}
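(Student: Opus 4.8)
The plan is to run Algorithm~\ref{alg:algorithm} on a carefully chosen item order dictated by the laminar hierarchy, and to replace the generic range factor $r$ from the analysis of Theorem~\ref{thm: EF1_exist} with a polynomial combinatorial bound that comes from laminarity. First I would record the structural consequences of the laminar property: the distinct agent lists $\{N_a : a\in A\}$ form a forest with $\Oh{n}$ nodes, the inclusion-maximal lists are pairwise disjoint, and for every fixed agent $i$ the lists containing $i$ form a chain. In particular the maximal lists partition the agents that own anything, so no item, and hence no envy edge, ever connects two different maximal lists. The instance therefore splits into independent blocks, and it suffices to analyse a single block and then recurse on its sub-hierarchy.

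Next I would fix the processing order in the outer while-loop so that an item of a more common group (a larger list $N_a$) is always assigned before an item of a strictly smaller, more specialised group, breaking ties within a group arbitrarily. With this order I would maintain two invariants. First, every envy edge $i\to j$ is \emph{witnessed}: since $\pi$ is an orientation, $\mathcal V_i(\pi_j)=\mathcal V_i(\pi_j\cap A_i)>\mathcal V_i(\pi_i)\ge 0$ forces some item $a\in\pi_j\cap A_i$, so $i,j\in N_a$ and the edge lives inside one group. Second, while the currently processed group is $G$, every item held by any agent belongs to a list that \emph{contains} $G$ (an ancestor in the forest), hence is relevant to every member of $G$. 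Consequently, whenever a resolved cycle is confined to agents that all find each other's current items relevant, the shift of Algorithm~\ref{alg:resolve_cycle} discards nothing: it behaves exactly like the standard item-preserving envy-cycle elimination, for which the usual source-based and potential arguments (as in the identical-valuations case of Theorem~\ref{thm: EF1-identicalvals}) give a polynomial number of steps.

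The main obstacle is the remaining case: a cycle whose agents lie in \emph{different} sub-blocks sharing a common ancestor list, resolved through commonly-relevant items. Here the shift can genuinely move an item to an agent to whom it is irrelevant, triggering the removal-and-redistribution that makes the generic bound only pseudo-polynomial. To control this I would abandon the range-based potential of Theorem~\ref{thm: EF1_exist} and instead track, for each item, the depth in the laminar forest of the list of its current owner. A cross-block shift can only push an item toward an owner whose membership chain is strictly longer, so each item crosses sub-block boundaries at most $\Oh{n}$ times (the forest has depth at most $n$); combined with the item-preserving, polynomially-bounded behaviour of the intra-group resolutions, this yields the claimed $\Oh{m^2n}$ running time. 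The delicate point I expect to spend the most effort on is verifying that this depth quantity is genuinely monotone under cross-block shifts and that the re-insertion of the removed items never resets it, so that termination within the stated bound is guaranteed.
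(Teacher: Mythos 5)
Your setup—processing items so that earlier items have superset (or disjoint) agent lists, and observing that every envy edge is witnessed by a shared relevant item—matches the paper's proof. But the proposal then diverges at exactly the point you call the ``main obstacle,'' and this is where the gap lies: with the chosen ordering, the obstacle does not exist. The cycles that must be eliminated so that the algorithm can find a source for the newly considered item $a$ live inside $G[N_a]$, so every item $b$ shifted along such a cycle was held by some agent of $N_a$. That holder lies in $N_a\cap N_b$ (the allocation is an orientation), and since $b$ was allocated before $a$, laminarity plus the ordering force $N_a\subseteq N_b$; hence every agent on the cycle, in particular the new owner of $b$, belongs to $N_b$. Thus Algorithm~\ref{alg:resolve_cycle} never discards an item, each item is allocated exactly once, and the $\Oh{m^2n}$ bound follows immediately. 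Note also that your intermediate invariant is too strong as stated: an agent in the same block may well hold an item whose list is a \emph{sibling} of $N_a$ in the laminar forest (say $N_b=\{3,4\}$ and $N_a=\{1,2\}$ under the common ancestor $\{1,2,3,4\}$); what is true, and what the argument actually needs, is only that items held by \emph{members of $N_a$} have lists containing $N_a$.

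Because you instead admit discard-and-redistribute events, your proof rests on the depth potential, and that step would fail. The quantity ``depth in the laminar forest of the list of the item's current owner'' is not well defined (an agent lies on a whole chain of lists), and under any natural reading it is not monotone: an item $b$ can legally reside only at agents of the fixed set $N_b$, and after being discarded it is re-inserted at an arbitrary source of $G[N_b]$---possibly the very agent that just lost it---so nothing prevents the potential from resetting. Your closing sentence concedes precisely this point is unverified, and no argument is offered for it. Moreover, even granting that each item crosses boundaries at most $\Oh{n}$ times, you would have $\Oh{mn}$ allocation events, each followed by cycle elimination costing up to $\Oh{mn}$, giving $\Oh{m^2n^2}$ rather than the claimed $\Oh{m^2n}$. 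The repair is to discard the redistribution machinery altogether and instead prove the legality of every shift, as above.
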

\begin{proof}
Recall that the algorithm described in the proof of Theorem \ref{thm: EF1_exist} picks undistributed items in random order. Here, instead of this, if the agent lists are laminar, we order the items in a tuple $(a_1, \dots, a_m)$ so that either $N_j\cap N_k =\emptyset$ or $N_k\subseteq N_j$ whenever $1\leq j < k \leq m$. In the algorithm, we will pick items exactly in this order. Consider the moment when a new item $a$ is allocated. Then shifting items along the cycle in $G_a$ will never result in allocating illegal items. Indeed, if some item $b$ was moved to agent $i\in N_a$, then $N_a \cap N_b \neq \emptyset$. Since $b$ was allocated before $a$, we conclude that $N_a \subseteq N_b$ and hence $i\in N_b$.

Therefore, each item is moved from not distributed to distributed precisely once, and potential cycle elimination afterwards takes time at most $\Oh{mn}$, so the EF1 orientation is computed in time at most $\Oh{m^2n}$.
\end{proof}


\section{EFX orientations}
\label{sec:EFX_orientations}

The paper by \citeauthor{christodoulou2023fair}~\shortcite{christodoulou2023fair} proves that it is \NP-complete to decide if an EFX orientation exists. We strengthen this result, by showing it is \NP-hard even when the graph has constant vertex cover. 

\begin{theorem}
\label{thm: EFX-NP_hard}
 Deciding whether an EFX orientation exists on graph $G$ is \NPh even when $G$ has a vertex cover of constant size.
\end{theorem}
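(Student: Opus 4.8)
The plan is to reduce from \partition, which is \NPc and suffices for \NPhness (membership in \NP is immediate, since an orientation can be checked for EFX in polynomial time). Given integers $s_1,\dots,s_k$ with $\sum_i s_i = 2T$, I would build a graph $G$ whose vertex set splits into a fixed-size set $C$ of \emph{hub} agents and an independent set containing one \emph{item} agent $x_i$ per integer $s_i$, where each $x_i$ is adjacent only to hubs. Since every non-hub vertex sees only $C$, the set $C$ is a vertex cover of constant size, which is exactly the promised bound. The hubs will consist of two distinguished \emph{bin} agents $b_1,b_2$ together with a constant-size \emph{balancing gadget}, and each item agent $x_i$ is joined to $b_1$ and $b_2$ by symmetric edges of weight $s_i$.

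First I would record the two elementary facts about the graph-orientation EFX condition under symmetric additive weights. Because agent $i$ only values the single edge it shares with $j$, the pair $(i,j)$ violates EFX precisely when the shared edge $\{i,j\}$ is oriented to $j$, the bundle $\pi_j$ contains at least one further edge, and $\mathcal V_i(\pi_i)<w(\{i,j\})$. Hence EFX yields exactly a family of \emph{lower bounds} of the form ``load of the envying endpoint $\ge$ weight of a grabbed edge'', and such a bound is triggered only when the grabbing endpoint is \emph{rich}, i.e. holds at least two edges. With this, the item gadget behaves as intended: orienting both of $x_i$'s edges away from $x_i$ leaves $x_i$ with value $0$ while both bins are rich, which is infeasible, so each $x_i$ keeps at least one edge; moreover, if $x_i$ keeps both (it ``wastes'' $s_i$) then both bins are forced to have load $\ge s_i$. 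Consequently each $x_i$ routes its weight to at most one bin, and the bin loads satisfy $L_1+L_2\le 2T$, where $L_1,L_2$ denote the item-weight collected by $b_1,b_2$.

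The core of the reduction is the balancing gadget, whose job is to force $L_1\ge T$ and $L_2\ge T$ using only edges among the constant hub set; together with $L_1+L_2\le 2T$ this pins $L_1=L_2=T$ and leaves no wasted item, i.e. an exact equal partition. For the forward direction (a valid partition yields an EFX orientation) I would orient the item edges according to the partition, fix the hub edges to the gadget's designated orientation, and verify the finitely many hub-pairs together with the uniform item-pairs against the violation-free condition above. For the backward direction I would argue that in \emph{every} EFX orientation the gadget certifies both demands $L_1,L_2\ge T$, and then read off $S_1=\{\,i: x_i\text{ is oriented to }b_1\,\}$ as one side of a valid partition.

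The hard part will be the design and analysis of the balancing gadget, and the obstruction is intrinsic to the EFX condition. A demand is created only by a \emph{rich} holder, yet any single vertex can always avoid being rich by keeping just its heaviest incident edge and shedding every lighter edge onto a neighbour as ``free weight''; so no constant-degree demander vertex can by itself force a load bound on a bin, and any weight it does dump on a bin would contaminate the item-load accounting. Overcoming this requires a gadget on several mutually connected hubs whose shedding escapes \emph{conflict}: an edge-count argument inside the hub subgraph must force some hub to be rich in every orientation, and the weights must be chosen so that wherever this forced richness lands it produces demands of weight exactly $T$ on both $b_1$ and $b_2$ while leaving the bins free of heavy hub edges. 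Calibrating these weights, proving the demands arise in \emph{every} EFX orientation rather than merely some, and keeping the number of hubs down to a small constant is the technical heart of the argument; the two correctness directions and the vertex-cover bound then follow routinely, and the resulting constant vertex cover at once rules out fixed-parameter algorithms for treewidth and every parameter it dominates.
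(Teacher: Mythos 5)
Your reduction skeleton coincides with the paper's: both reduce from \partition, both use two bin agents ($i$ and $j$ in the paper) joined to every item agent $x_v$ by symmetric edges of weight $s_v$, both observe that under symmetric additive weights an EFX violation between a pair occurs exactly when the shared edge is held by a ``rich'' endpoint (at least two edges) and the other endpoint's value is below the shared weight, and both conclude that each $x_v$ keeps at least one incident edge so that the bin loads can sum to at most $2T$, pinning each bin at exactly $T$. The vertex-cover accounting is also the same. However, you have a genuine gap: the entire crux of the reduction is the constant-size gadget that forces each bin to demand load at least $T$ in \emph{every} EFX orientation, and you explicitly leave its ``design and analysis'' unresolved, even identifying the obstruction (a single vertex can always shed all but its heaviest edge and avoid being rich, so no vertex can unilaterally create a demand) without overcoming it. A plan that says ``an edge-count argument must force some hub to be rich with calibrated weights'' is a statement of the goal, not a proof.

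For the record, the paper resolves exactly this obstruction as follows. It attaches to each bin a copy of a $4$-vertex clique gadget $X_h$ (Example 1 of \cite{christodoulou2023fair}) that admits \emph{no} EFX orientation on its own, with internal edges $(X_{h1},X_{h2})$ and $(X_{h3},X_{h4})$ of weight $B$ and all other internal edges of weight $1$, plus a single bridge edge of weight $B$ from the bin to $X_{h1}$. Non-orientability of the bare clique forces the bridge to be oriented \emph{into} the gadget in any EFX orientation; and since $B>2$, the bundle containing the heavy internal edge $(X_{h3},X_{h4})$ can contain nothing else, which forces $X_{h1}$ to hold the bridge plus at least one further edge --- so $X_{h1}$ is rich in every EFX orientation, creating a demand of exactly $B$ on the bin. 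Crucially the bridge is oriented away from the bin, so the bin's load consists purely of item edges, which is precisely the ``no contamination'' property you asked for but did not achieve. Note that richness is forced here not at a single vertex in isolation but by the interaction of the non-orientable clique with the $B>2$ separation of weights; without constructing such a gadget and proving the demand arises in every (not merely some) EFX orientation, your reduction does not go through.
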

\begin{proof}
We show a reduction from the \NP-complete problem \partition. An instance of partition consists of a multiset $S$ of positive integers. Let $T$ be the number of elements in $S$. Let $\sum_{t=1}^{T} S_t = 2B$. Given $S$ we want to decide if the elements can be divided into two subsets $S_1$ and $S_2$ such that the sum of elements in $S_1$ is equal to that of $S_2$. 
Given an instance of \partition, we will construct a graph $G=(V,E)$. We start by constructing a bipartite graph such that there is a vertex (i.e. an agent) for every element in $S$ on one side, call them $x_1, \cdots, x_T$, and two additional vertices $i$ and $j$ on the other.
Now we will create the edges, where $|E| = m$, the number of items and weights on edges are the valuation of that item for the agents on both endpoints. For all vertices $x_v \in \{x_1, x_2 ..., x_T\}$ we create an edge $(i, x_v)$ of weight $S_v$ and an edge $(j, x_v)$ of weight $S_v$. We create two copies of gadget $X$, we will call these $X_1$ and $X_2$. 
Gadget $X$ is a clique on $4$ vertices based on Example 1. in \cite{christodoulou2023fair} that has no EFX-orientation on its own.
Let the vertices in the gadget $X_h$ be $X_{h1}$ to $X_{h4}$ . Edges $(X_{h1},X_{h2})$ and $(X_{h3},X_{h4})$ have weight $B$. All other edges in $X_h$ have weight $1$.  
Finally, we create edges $(i,X_{11})$ and $(j,X_{21})$ of weight $B$.
This completes the construction, see Figure~\ref{fig:thm1}.
\begin{figure}[h]
\centering
\includegraphics[width=0.65\textwidth, page=1]{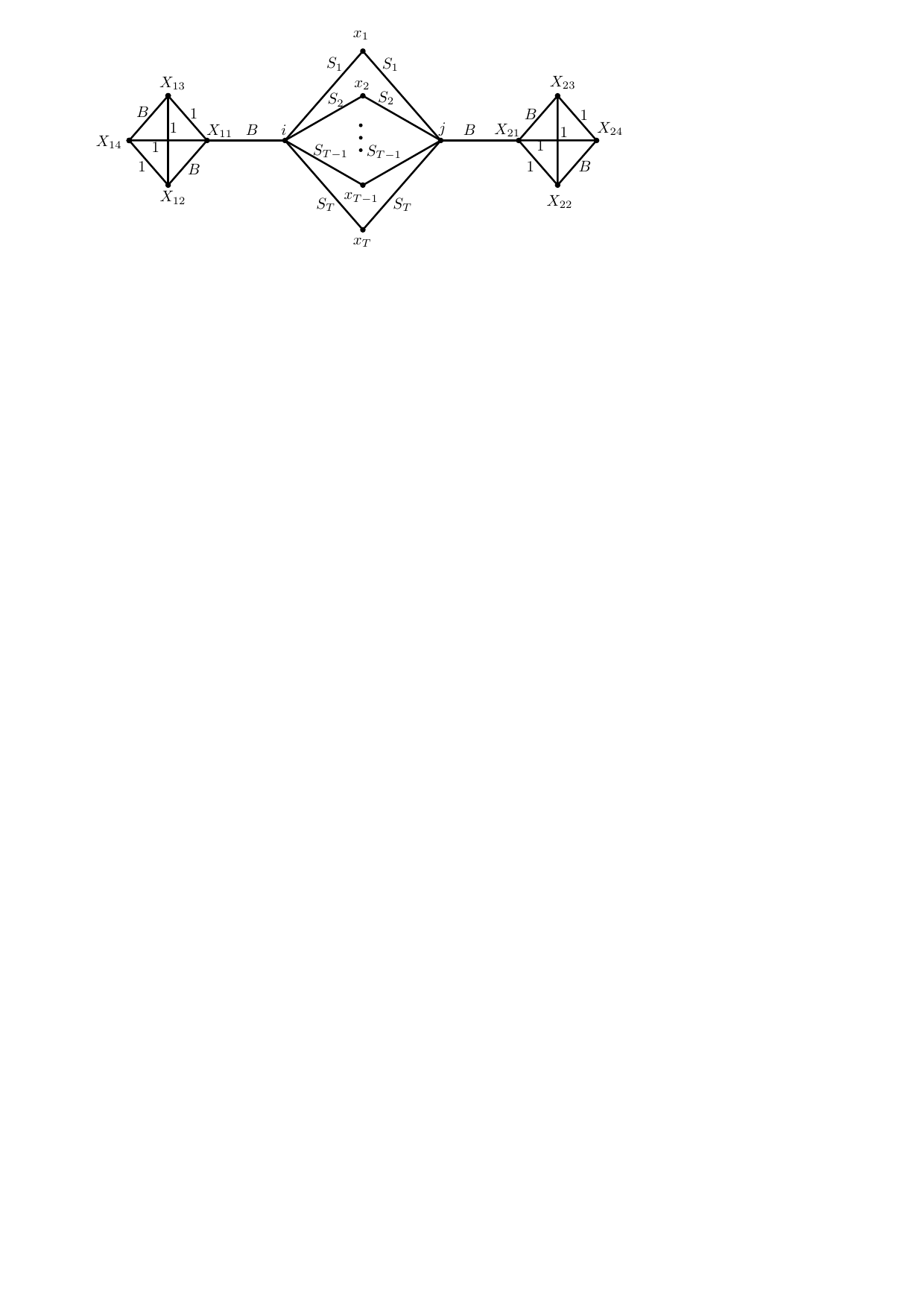}
\caption{The construction used in the proof of Theorem~\ref{thm: EFX-NP_hard}.}
\label{fig:thm1}
\end{figure}

\ifshort
One can now show that in order to get an EFX-orientation, the edges $(i, X_{11})$ and $(j, X_{21})$ have to be allocated to $X_{11}$ and $X_{21}$, respectively. In addition, the only way to satisfy all the remaining agents in the gadgets $X_1$ and $X_2$, both $X_{11}$ and $X_{21}$ need to receive additional edge of value $1$. Hence in order for both $i$ and $j$ to be satisfied, they each need to value their bundle at least $B$. Since to achieve this both $i$ and $j$ need to receive more than just one edge, each of agents $x_k$ for $k\in [T]$ has to receive at least one of its incident edges. Therefore, the only way for the graph $G$ to admit EFX-orientation is if the elements associated with the edges allocated to $i$ and the elements associated with the edges allocated to $j$ form a partition of $S$. 
\fi
\iflong 

Now, we will show that given a YES instance of \partition this graph construction will have an EFX orientation. Assume that we have a solution for \partition. Edges $(i,x_v)$ and $(j,x_v)$ corresponding to an element $v$ in $S_1$ will be orientated towards vertex $i$ and away from $j$ (i.e., $(i,x_v)$ is in the bundle of $i$ and $(j,x_v)$ is allocated to $x_v$). Analogously, Edges $(i,x_v)$ and $(j,x_v)$ corresponding to an element $v$ in $S_2$ will be orientated such that $(j,x_v)$ is in the bundle of $j$ and $(i, x_v)$ in the bundle of $x_v$. This allocation means that:
\begin{itemize}
    \item each of the agents/vertices $x_1, \ldots, x_T$ in the middle will only be allocated one item. The other item, relevant to them, will be of the same value as their current bundle. So they are not envious.
    \item agent $i$ will have a bundle with value $B$, because they are allocated exactly half of the value of $S$. They will not envy the middle vertices (who got the items which are relevant to them but they didn't get) because the sum of all the items they didn't get is $B$, so the bundle of each of the middle vertices has value at most $B$. 
    \item The same argument as above will hold for agent $j$.
\end{itemize}
It remains to show that there is an allocation of the remaining edges to the agents in the gadgets $X_1$ and $X_2$, such that the allocation is EFX, see \ref{fig:gadgetXh}. For $h\in \{1,2\}$, we allocate to 
\begin{itemize}
    \item $X_{h1}$ the edge $(X_{h4},X_{h1})$ and the edge between $X_{h1}$ and $i$ or $j$, respectively; 
    \item $X_{h2}$ the edges $(X_{h2}, X_{h4})$ and $(X_{h2}, X_{h1})$;
    \item $X_{h3}$ the edges $(X_{h1}, X_{h3})$ and $(X_{h2}, X_{h3})$;
    \item $X_{h4}$ the edge $(X_{h3}, X_{h4})$;
\end{itemize}
 It is easy to see that 
 the only agent which is envious at all is vertex $X_{h3}$ (but since we are looking for EFX, not an EF allocation, this is still satisfied.) 
Since $i$ ($j$) have already been allocated a bundle of value $B$, directing $(i,X_{11})$ ($(j,X_{21})$) towards $X_{11}$ ($X_{21}$), respectively, does not make the agents $i$ ($j$) envious. 

\begin{figure}[h]
\begin{center}
\includegraphics[width=0.25\textwidth]{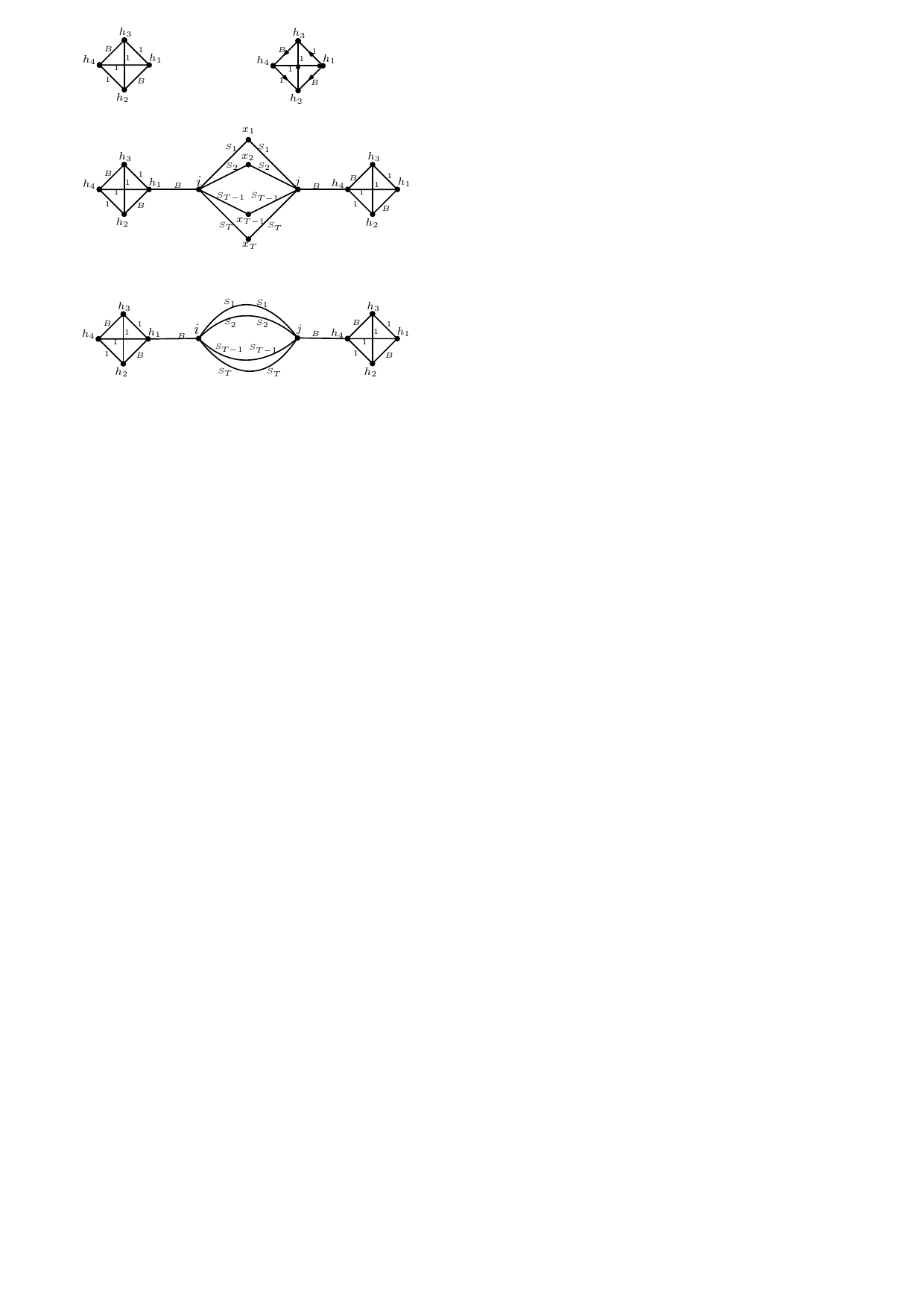}  

\end{center}
\caption{The Orientation solution for the Gadget $X_{h}$ used in Theorem \ref{thm: EFX-NP_hard} and Theorem \ref{thm: EFX-multigraph}.}
\label{fig:gadgetXh}
\end{figure}
For the other direction, assume that the graph $G$ has a valid EFX orientation. We know that for gadgets $X_1$ and $X_2$ to have an EFX orientation, the edges from $i$ and $j$ to the gadgets must be directed to vertex $X_{h1}$.
In addition, the bundle that contains the edge $(X_{h3}, X_{h4})$, cannot contain any other edge, since $B>2$, and the envy between $X_{h3}$ and $X_{h4}$ would not be resolved by removing any other edge from the bundle that contains $(X_{h3}, X_{h4})$. Hence, the bundle of $X_{h1}$ contains at least one more edge (either $(X_{h1}, X_{h3})$ or $(X_{h1}, X_{h3})$.
Therefore, for the agent $i$ not to be envious of $X_{11}$, they must receive a bundle with a value at least $B$. The same must hold for the agent $j$. Note for all $v\in [T]$, we have $S_v < B$, so both $i$ and $j$ will receive at least two items. It follows that for each $v\in [T]$, the agent $x_v$ has to receive one of its two incident edges. Since the set $S$ sums to $2B$, the only way to achieve this is to give $i$ a bundle of value exactly $B$ and a bundle of value exactly $B$ to agent $j$ such that these two bundles represent disjoint subsets of $S$. If such a split is possible, then there exists a valid \partition.
\fi
\end{proof}

\iflong 
\begin{theorem}
\fi
\ifshort 
\begin{theorem}[$\star$]
\fi
\label{thm: EFX-multigraph}
Deciding whether an EFX orientation exists on a multigraph $G$ is NP-hard even when $G$ has a constant number of vertices.
\end{theorem}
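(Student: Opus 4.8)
The plan is to reduce from \partition once more, reusing the gadget-based construction behind \Cref{thm: EFX-NP_hard} but exploiting the freedom of parallel edges to collapse the only part of that construction whose size grew with the input: the $T$ ``middle'' vertices $x_1,\dots,x_T$. Recall that in the simple-graph reduction each element $S_v$ was encoded by a dedicated vertex $x_v$ joined to both $i$ and $j$ by an edge of weight $S_v$, and the sole purpose of $x_v$ was to force element $v$ to be counted towards exactly one of $i$ or $j$. In a multigraph we no longer need such a subdivision: we can place element $S_v$ directly as one of $T$ parallel edges between $i$ and $j$, each of weight $S_v$. Every such edge can only be oriented towards $i$ or towards $j$, which is precisely the binary choice that the vertex $x_v$ was simulating.

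Concretely, I would build the multigraph on the constantly many vertices $\{i,j\}\cup V(X_1)\cup V(X_2)$ (ten in total), keeping the two copies $X_1,X_2$ of the EFX-free clique gadget of \cite{christodoulou2023fair}, the two connector edges $(i,X_{11})$ and $(j,X_{21})$ of weight $B$, and adding the $T$ parallel $(i,j)$-edges of weights $S_1,\dots,S_T$. All valuations stay symmetric and additive, with each endpoint valuing an edge at its weight. For the forward direction, given a \partition solution $S=S_1\uplus S_2$, I orient the $v$-th parallel edge towards $i$ if $v\in S_1$ and towards $j$ otherwise, so that both $i$ and $j$ receive a bundle of value exactly $B$; the two gadgets are oriented exactly as in \Cref{fig:gadgetXh}, and the connector edges are sent to $X_{11}$ and $X_{21}$. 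One then checks EFX as before: $i$ and $j$ value each other's bundle at exactly $B=\mathcal V_i(\pi_i)$, each values the connector it gave away at $B$ as well, and the internal gadget orientation guarantees EFX inside $X_1$ and $X_2$.

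For the converse I would re-run the gadget analysis verbatim from \Cref{thm: EFX-NP_hard}: any EFX orientation must send $(i,X_{11})$ to $X_{11}$ and $(j,X_{21})$ to $X_{21}$, and each of $X_{11},X_{21}$ must additionally receive a weight-$1$ gadget edge, so that $i$ (resp.\ $j$) can avoid envying $X_{11}$ (resp.\ $X_{21}$) only by holding a bundle of value at least $B$. Since the connector edges are no longer available to $i$ and $j$, and their only other incident items are the parallel edges, the value of each of $i$ and $j$ comes solely from those parallel edges, whose weights sum to $2B$. Hence $\mathcal V_i(\pi_i)\ge B$ and $\mathcal V_j(\pi_j)\ge B$ force $\mathcal V_i(\pi_i)=\mathcal V_j(\pi_j)=B$, and the induced split of $\{S_1,\dots,S_T\}$ is a valid \partition.

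The main obstacle, and the only place where the multigraph setting genuinely differs from \Cref{thm: EFX-NP_hard}, is to argue that replacing each $x_v$ by a single parallel edge does not open up new EFX orientations that break the correspondence with \partition. In the subdivided construction $x_v$ could in principle hoard both of its incident edges; I must confirm that the direct parallel edge removes exactly this spurious degree of freedom without introducing others, and that the threshold inequalities used in the gadget argument (for instance $B>2$ and $S_v<B$ for every $v$) still hold, so that no single parallel edge can by itself lift $i$ or $j$ above the $B$ barrier. Verifying EFX between $i$ and $j$ over the parallel edges — in both directions and after the removal of any single edge — is the technical heart, but it reduces to the elementary observation that two additive agents who split a set of total value $2B$ into two equal halves are mutually EFX.
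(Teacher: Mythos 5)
Your proposal is correct and follows essentially the same route as the paper: the paper's own proof also replaces the middle vertices $x_1,\dots,x_T$ by $T$ parallel edges between $i$ and $j$ (keeping the two gadget copies and the weight-$B$ connector edges, for $10$ vertices total), reuses the gadget analysis of Theorem~\ref{thm: EFX-NP_hard} to force both $i$ and $j$ to hold value at least $B$, and concludes that an EFX orientation exists iff the parallel edges split into two halves of value $B$ each. The ``main obstacle'' you flag is resolved exactly as you suggest, by the observation that two additive agents splitting total value $2B$ equally are mutually EFX.
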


\ifshort
\begin{proof}[Proof Sketch]
We begin with a construction very similar to that of Theorem \ref{thm: EFX-NP_hard}. The only difference is that instead of creating a set of vertices $x_1 \cdots x_T$ we will create $T$ many edges between vertices $i$ and $j$, see Figure \ref{fig:thm2}.
\end{proof}
\fi

\iflong
\begin{proof}
    Similar to that of Theorem~\ref{thm: EFX-NP_hard} we will reduce from \partition. 
Given an instance of \partition, we will construct a graph $G=(V,E)$.  We create two copies of gadget $X$, called $X_1$ and $X_2$, the same as in \ref{thm: EFX-NP_hard}. 
Now we will create the edges, where $|E| = m$, the number of items. We create $T$ many edges between vertices $i$ and $j$, each associated with a distinct member of $S$. Finally, we create edges $(i,X_{11})$ and $(j,X_{21})$ of weight $B$.

Now, we will show that given a YES instance of \partition this graph construction will have an EFX orientation. Assume that we have a solution for \partition. We give the set of items in $S_1$ to agent $i$ and $S_2$ to agent $j$ (plus the same assignment of items to gadget $X$ exactly the same as in Theorem~\ref{thm: EFX-NP_hard}). It is easy to see that this is a valid EFX-orientation. This is because neither $i$ nor $j$ will envy one or another, as they have a bundle of the same value and for any other pair, the analysis is exactly the same as in Theorem~\ref{thm: EFX-NP_hard}. 
For the other direction, following the analogous discussion as in Theorem~\ref{thm: EFX-NP_hard}, we see that $i$ values the bundle of $X_{11}$ exactly $B$ and there is an item in the bundle of $X_{11}$ that $i$ values $0$. Therefore, $i$ needs to value its bundle at least $B$. Analogously, $j$ needs to value its bundle at least $B$. Hence, we can only have a valid EFX orientation if it is possible to assign the $T$ edges to $i$ and $j$ in an envy-free way, i.e. they both have a bundle of the value $B$.
\end{proof}
\fi

\begin{figure}[h]

\centering
\includegraphics[width=0.65\textwidth, page=2]{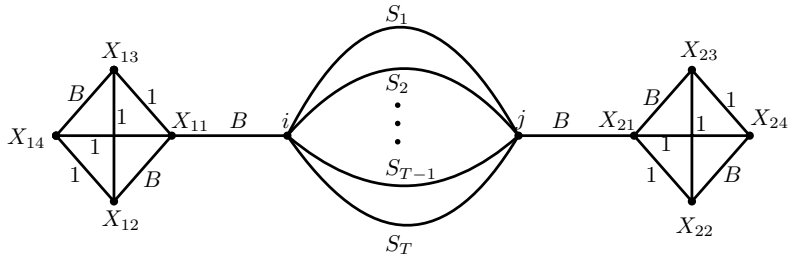}
\caption{The construction used in the proof of Theorem~\ref{thm: EFX-multigraph}.}
\label{fig:thm2}
\end{figure}


\subsection{Slim tree-cut width and the FPT algorithm}

Examining the hardness of \cite{christodoulou2023fair}, we can see that their reduction can be made to work on graphs with constant maximum degree. In combination with Theorem~\ref{thm: EFX-NP_hard}, this shows that efficient algorithms for deciding the existence of EFX orientations are unlikely already for very restricted settings. In this section, we show an efficient algorithm for a setting that is basically on the limit of tractability from the point of view of parameterized complexity (in a sense, that more general parameters studied so far would contain some of the hard instances). Before we can go into more details though, we have to introduce some notions and notations.

\paragraph{Parameterized Complexity.} 
An instance of a parameterized problem $Q\subseteq \Sigma\times\mathbb{N}$, where $\Sigma$ is fixed and finite alphabet, is a pair $(I,k)$, where $I$ is an input of the problem and~$k$ is a \emph{parameter}.
The ultimate goal of parameterized algorithmics is to confine the exponential explosion in the running time of an algorithm for some \NPh problem to the parameter 
and not to the instance size. The best possible outcome here is the so-called \emph{fixed-parameter} algorithm with running time $f(k)\cdot |I|^\Oh{1}$ for any computable function $f$. 
That is, for every fixed value of the parameter, we have a polynomial time algorithm where the degree of the polynomial is independent of the parameter. 
For a more comprehensive introduction to parameterized complexity, we refer \iflong the interested reader \fi to the monograph of~\citeauthor{CyganFKLMPPS2015}~\shortcite{CyganFKLMPPS2015}.

As follows from our reduction, deciding whether EFX orientation exists in the graph setting is hard, even when the underlying graph has constant vertex cover and tree-cut width. In particular, this rules out most of the vertex-separator based parameters. However, we show that the recently introduced parameter slim tree-cut width (also equivalent to super edge-cut width, see \cite{ganian2024slim}) allows us to achieve tractability.

For simplicity, here we work with super edge-cut width. For a graph $G$ and a spanning tree $T$ of $G$, let the \emph{local feedback edge set} at $v\in V$ be 
$E_{\loc}^{G,T}(v)=\{uw\in E(G)\setminus E(T)~|~$ the unique path between $u$ and $w$ in $T$ contains $v\}.$

\begin{definition}
The edge-cut width of the pair $(G,T)$ is $\widthshort(G,T)=1+\max_{v\in V} |E_{\loc}^{G,T}(v)|$, and the edge-cut width of $G$ $($denoted $\ecw(G))$ is the smallest \width\ among all possible spanning trees $T$ of $G$.
\end{definition}

If in the last definition, we allow to choose the spanning tree in any connected supergraph of $G$, this leads to the notion of \emph{\supwidth} (denoted by $\supwidthshort(G)$): \[\supwidthshort(G)=\min\{\widthshort(H, T)~|~H\supseteq G \text{, } T\text{ -- spanning tree of }H\}.\]

Super edge-cut width is a strictly more general parameter than degree+treewidth and feedback edge number, but it is more restrictive than tree-cut width \cite{ganian2024slim}.

\iflong 
\begin{theorem}
\fi
\ifshort 
\begin{theorem}[$\star$]
\fi
\label{thm: EFX_FPT}
  Deciding whether an EFX orientation exists for a given graph $G$ is fixed-parameter tractable if parameterized by the slim tree-cut width of $G$.
\end{theorem}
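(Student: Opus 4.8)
The plan is to design a dynamic-programming algorithm that processes a rooted spanning tree $T$ of some supergraph $H\supseteq G$ witnessing $\supwidthshort(G)=k$, and computes whether an EFX orientation of $G$ exists. First I would fix such a pair $(H,T)$ (computable in \FPT time by the results on slim tree-cut width in \cite{ganian2024slim}) and root $T$ at an arbitrary vertex. The key structural fact I intend to exploit is that for every vertex $v$, the edges of $G$ that must be reasoned about ``across'' $v$ fall into two kinds: tree edges of $T$ incident to $v$, and non-tree edges in the local feedback sets, of which there are at most $k-1$ by definition of $\widthshort(H,T)$. The goal is to traverse $T$ bottom-up and, for each subtree $T_w$ rooted at $w$, maintain a table recording, for each way the ``boundary'' edges crossing out of $T_w$ could be oriented, whether the partial orientation restricted to fully-interior edges can be extended to an EFX orientation of the subtree.

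The main conceptual step is identifying the correct DP state. Because every edge is a good relevant only to its two endpoints, an EFX condition between agents $i$ and $j$ only constrains pairs that share an edge; thus EFX is a purely local constraint on each vertex relative to its graph-neighbours. For each subtree $T_w$, the state I would store is (a) the orientation of the (few) edges that cross the boundary between $T_w$ and the rest of $G$, together with (b) for each boundary agent, enough information about its received bundle to later check EFX against neighbours outside $T_w$. The delicate point is that an agent's bundle value is additive over received edges, so to check EFX I need, for each relevant agent on the boundary, the multiset of values of edges it has been allocated so far (or a sufficient summary: its current bundle value and its smallest positive received edge-value, which is exactly what EFX inspects). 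The number of boundary edges at any cut is bounded in terms of $k$, so the number of possible boundary orientations is $2^{\Oh{k}}$; the subtlety is bounding the number of relevant ``bundle summaries'' that must be tracked, which I would argue is also a function of $k$ by again using that only $\Oh{k}$ edges cross each local cut, so each boundary agent accumulates only boundedly many still-relevant edges from inside $T_w$.

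The hard part, which I expect to be the main obstacle, is the book-keeping at a vertex $v$ where many tree-children meet: one must combine the child tables while simultaneously deciding the orientation of the (up to $k-1$) local feedback edges at $v$ and the tree edges down to the children, and then verify the EFX constraints local to $v$ and to each child agent. The challenge is that $v$ may have high degree, so a naive enumeration over orientations of all edges at $v$ is not \FPT. To handle this I would process the children of $v$ one at a time in an auxiliary left-to-right sweep (a standard ``nice'' decomposition with introduce/join/forget stages adapted to edges), carrying a running summary of $v$'s partial bundle; the point is that the EFX check for $v$ only compares $v$'s bundle against each neighbour's bundle individually, and each such pairwise check needs only the summary data stored in the child table, so the combination can be done incrementally without enumerating all of $v$'s incident edges at once. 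I would argue correctness by a standard exchange/induction argument showing the table for $T_w$ is sound and complete, and bound the running time by (number of states $2^{\Oh{k}}\cdot (\text{poly in bundle summaries})$) times a polynomial factor per tree node, yielding $f(k)\cdot |G|^{\Oh{1}}$. The trickiest detail to get right is ensuring the bundle-summary information retained at the boundary is simultaneously sufficient to certify all future EFX checks and small enough to keep the state count bounded by a function of $k$ alone.
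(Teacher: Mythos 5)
Your skeleton matches the paper's: take the supergraph $H\supseteq G$ and spanning tree $T$ witnessing $\supwidthshort(G)=k$, run a leaf-to-root dynamic program over $T$, exploit that only $\Oh{k}$ edges cross each cut, and handle high-degree vertices by processing children incrementally rather than enumerating all incident edges at once (the paper does this via a case analysis on the in-neighborhood of $v$ and a greedy, linear-time pass over the ``closed'' children). But there is a genuine gap in your DP state, and it is exactly at the point you flag as the trickiest detail. You propose to carry, for each boundary agent, a numeric bundle summary (current bundle value and smallest positive received edge-value), and you justify boundedness by claiming each boundary agent ``accumulates only boundedly many still-relevant edges from inside $T_w$.'' That claim is false: a boundary vertex can have unbounded degree \emph{inside} its subtree, so its bundle value ranges over a number of possibilities that depends on the input, not on $k$; only the crossing edges are bounded by $k$, not the interior edges incident to a boundary vertex. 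Moreover, the smallest-positive-edge statistic is the wrong summary in this setting: since an agent values only its incident edges, every other item in an envied neighbor's bundle is worth zero to the envier, and removing a zero-valued item does not reduce the envier's valuation at all. Hence EFX here is all-or-nothing -- if $u$ envies $w$, then $w$'s bundle must consist of a \emph{single} edge (in-degree one in the orientation).

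This in-degree-one characterization is the key lemma your proposal is missing, and it is what lets the paper avoid numeric state entirely. The paper's records at $v$ are pairs $(R,S)$ where $R$ is the orientation of the $\Oh{k}$ edges crossing the boundary $\delta(v)$ (with $|\delta(v)|\le 2k+2$) and $S\subseteq\delta(v)$ is the set of vertices committed to having in-degree exactly one, i.e., the vertices that are permitted to be envied; this gives $2^{\Oh{k^2}}$ records, purely combinatorial. The second observation that makes deferred value comparisons unnecessary is that a partial solution at $v$ orients \emph{every} edge with an endpoint in $V_v$, so the bundle of every interior vertex -- including interior boundary vertices -- is fully determined within the subtree; consequently all envy originating from inside $V_v$ can be decided immediately, and the only obligation that must be propagated upward is the combinatorial constraint ``each vertex in $S$ ends with in-degree one,'' enforced through $R$ and consistency checks when sibling records are merged. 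Without the in-degree-one lemma your table either has unboundedly many states or is unsound, so the correctness-and-running-time argument you sketch cannot be completed as written.
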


\subsection{Proof of Theorem~\ref{thm: EFX_FPT}}

Let $H$ be the supergraph of $G$ with the spanning tree $T$ such that 
$\supwidthshort(G)=\widthshort(H, T)=k$. 
We begin by recalling some basic properties of the super 
edge-cut width, in particular, we adapt the related notion of the boundary of a vertex, firstly introduced in \cite{GanianK2021} for a weaker parameter.

For $v \in V(T)$, let $T_v$ be the subtree of $T$ rooted at $v$, let $V_v=V(G \cap T_v)$, and let $\bar V_v=N_G(V_v)\cup V_v$. We define the \emph{boundary} $\delta(v)$ of $v$ to be the set of endpoints of all edges in $G$ with precisely one endpoint in $V_v$ (observe that the boundary can never have a size of $1$). 
A vertex $v$ of $T$ is called \emph{closed} if $|\delta(v)| \le 2$ and \emph{open} otherwise. 

\begin{observation}[\cite{GanianK2021}]
\label{obs:basiclfesproperties}
Let $v$ be a vertex of $T$. Then:
\begin{enumerate}
\item $\delta (w)=\{v,w\}$ for every closed child $w$ of $v$ in $T$, and $vw$ is the only edge between $V_w$ and $V\setminus V_w$ in $G$.
\item $|\delta(v)|\leq 2k+2$. 
\item Let $\{v_i|i\in [t]\}$ be the set of all open children of $v$ in $T$. Then, $\delta(v)\subseteq \cup _{i=1}^t \delta(v_i) \cup \{v\}\cup N_G(v)$ and $t\leq 2k$.
\end{enumerate}
\end{observation}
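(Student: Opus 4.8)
The plan is to reduce all three items to a single structural correspondence between the boundary $\delta(v)$ and the cut of $G$ induced by the subtree $T_v$, and then read off each statement by a short counting or tracing argument. Throughout I assume $G$ is connected (otherwise one argues component-by-component), and I fix a vertex $v\in V(T)$; if $v$ is the root then $V_v=V(G)$, so there are no crossing edges and all three items hold vacuously, hence assume $v$ is not the root and let $e_v=v\,p(v)$ be the tree edge joining $v$ to its parent $p(v)$. Since $T$ is a tree, $e_v$ is the unique edge of $T$ with exactly one endpoint in $V(T_v)$, so a $G$-edge $xy$ has exactly one endpoint in $V_v$ if and only if its unique $T$-path traverses $e_v$. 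Call such edges the \emph{crossing edges} of $v$; then $\delta(v)$ is exactly the set of their endpoints, which already explains why $|\delta(v)|$ is never $1$: one crossing edge contributes one endpoint on each side of the cut, so the size is $0$ or at least $2$. The engine of the whole argument is that every crossing edge other than $e_v$ itself is a non-tree edge whose $T$-path contains $v$, hence lies in $E_{\loc}^{H,T}(v)$, which has size at most $k-1$ by the definition of $\widthshort(H,T)=k$.

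For the first item, let $w$ be a closed child of $v$, so $|\delta(w)|\le 2$. Connectivity of $G$ together with $\emptyset\ne V_w\subsetneq V(G)$ forces at least one crossing edge of $w$, hence $|\delta(w)|\ge 2$ and therefore $|\delta(w)|=2$, with exactly one boundary vertex inside $V_w$ and one outside. The unique tree edge leaving $T_w$ is $e_w=wv$; being a crossing edge of $w$ (its endpoint $w$ is inside $V_w$ and $v$ is outside), its endpoints $v,w$ both lie in $\delta(w)$, so the equality $|\delta(w)|=2$ pins $\delta(w)=\{v,w\}$. Since every crossing edge has both endpoints in $\delta(w)=\{v,w\}$, the edge $vw$ is the only crossing edge, i.e.\ the only $G$-edge between $V_w$ and $V\setminus V_w$. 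The one point requiring care in the supergraph formulation is that $e_w$ must genuinely be a $G$-edge for $\delta(w)$ to equal $\{v,w\}$; this is automatic when the spanning tree lies inside $G$ and is the single place where the super/edge-cut distinction must be checked on the witnessing supergraph.

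The second item is a direct count. The crossing edges of $v$ consist of at most the single tree edge $e_v$ (present only if $e_v\in E(G)$) together with the non-tree crossing edges, and the latter all sit in $E_{\loc}^{H,T}(v)$ and so number at most $k-1$. Thus $v$ has at most $k$ crossing edges, contributing at most $2k$ endpoints, and therefore $|\delta(v)|\le 2k\le 2k+2$.

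The third item splits into an inclusion and a cardinality bound. For the inclusion, take $x\in\delta(v)$ witnessed by a crossing edge $xy$: if this edge is incident to $v$ then $x\in\{v\}\cup N_G(v)$, so assume it is not. Then its endpoint lying in $V_v$ is distinct from $v$, hence sits in some child subtree $V_w$, while the other endpoint lies outside $V_w$; so $xy$ is also a crossing edge of $w$ and the relevant endpoint lands in $\delta(w)$. If $w$ were closed, the first item would force $xy=vw$, contradicting non-incidence to $v$, so $w$ is open and $x\in\bigcup_i\delta(v_i)$, proving $\delta(v)\subseteq\bigcup_i\delta(v_i)\cup\{v\}\cup N_G(v)$. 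For the bound on $t$, note that each open child $w$ must possess a non-tree crossing edge, for otherwise its only crossing edge would be $e_w=wv$ and $w$ would be closed; every such edge lies in $E_{\loc}^{H,T}(v)$, and because its $T$-path passes through $v$ it can be a crossing edge of at most two children of $v$ (its two endpoints lie in at most two distinct child subtrees). Charging each open child to one of its non-tree crossing edges is therefore at most two-to-one into $E_{\loc}^{H,T}(v)$, giving $t\le 2|E_{\loc}^{H,T}(v)|\le 2(k-1)\le 2k$. I expect this charging step for $t\le 2k$ — tracking how a single non-tree edge can simultaneously serve two child-cuts — to be the main obstacle, alongside the case analysis underpinning the inclusion.
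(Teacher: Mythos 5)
The paper itself contains no proof of this observation: it is imported from \cite{GanianK2021}, where it is proved for edge-cut width, i.e.\ for a spanning tree $T$ of $G$ itself. Your reconstruction is exactly that standard argument -- identify $\delta(v)$ with the endpoints of the $G$-edges crossing the cut $(V_v,V\setminus V_v)$, note that every crossing edge other than the parent tree edge $e_v$ is a non-tree edge of $E_{\loc}^{H,T}(v)$, and count -- and items 2 and 3 come out sound: your bound $|\delta(v)|\le 2k$ is even slightly stronger than the stated $2k+2$ (consistent with this paper's normalization $\widthshort(H,T)=1+\max_v|E_{\loc}^{H,T}(v)|$), and the two-to-one charging of open children to edges of $E_{\loc}^{H,T}(v)$ is the right mechanism for $t\le 2k$. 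One slip: the root case is \emph{not} vacuous, since items 1 and 3 quantify over the children of $v$ and the root has children; this is harmless only because your arguments for those two items use the child edges $e_w$ and $E_{\loc}^{H,T}(v)$, never the parent edge of $v$, so they apply verbatim at the root -- but you should say that rather than dismiss the case.

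The genuine gap is the one you flag in item 1 but do not discharge. In the super edge-cut width setting, $\delta(w)=\{v,w\}$ requires $w\in V(G)$, $\emptyset\neq V_w\subsetneq V(G)$, and crucially $vw\in E(G)$, and none of this is automatic when $T$ is a spanning tree of a proper supergraph $H$ -- not even for an \emph{optimal} pair $(H,T)$. Concretely, let $E(G)=\{wa,ab,bv,vc,cx,xy,yc\}$, $H=G\cup\{vw\}$, and $T=\{vw,wa,vb,vc,cx,cy\}$ rooted at $v$: the two non-tree edges $ab$ and $xy$ have $T$-paths $a\text{-}w\text{-}v\text{-}b$ and $x\text{-}c\text{-}y$, so $\widthshort(H,T)=2=\supwidthshort(G)$ (the triangle $cxy$ forces $\supwidthshort(G)\ge 2$), yet the child $w$ of $v$ is closed with $\delta(w)=\{a,b\}\neq\{v,w\}$ and its unique crossing edge is $ab$, while $vw\notin E(G)$. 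So closing item 1 in the generality this paper actually uses it (the records for closed children in Lemma~\ref{lem:combRecordsnew} treat $v_iv$ as a genuine item of $G$) requires a normalization lemma -- e.g.\ re-hanging a closed subtree along its unique crossing $G$-edge and checking the width does not increase, which repairs the example above -- and neither your proof nor the citation to \cite{GanianK2021}, which only covers $T\subseteq G$, supplies it. Since your proof of the inclusion in item 3 invokes item 1 to rule out closed children, the gap propagates there as well. To be fair, this defect is inherited from the paper's own adaptation, but ``automatic when the spanning tree lies inside $G$'' is an observation about the wrong setting, not a proof for the one at hand.
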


We can now define the records that will be used in our dynamic program. Intuitively, these records will be computed in a leaf-to-root fashion and will store at each vertex $v$ information about possible EFX orientations for the subtree of $T$ rooted at $v$.

Let $R$ be a binary relation on $\delta(v)$, and $S$ a subset of $\delta(v)$. 

\begin{definition}
    \label{def:record}
We say that $(R,S)$ is a \emph{record} at vertex $v$ if there exists an orientation $D$ of $G$ restricted to vertices in $\bar V_v$ and edges with at least one endpoint in $V_v$ such that:
\begin{enumerate}
    \item The partial allocation defined by $D$ is EFX between any two vertices in $V_v$.  
    \item $R$ is the set of all arcs of $D$ that have precisely one endpoint in $V_v$.
    \item For every $w\in \delta (v)$, if there is envy from some vertex $u\in V_v$ towards $w$ in $D$ then $w\in S$.
    \item The in-degree of every $w\in S$ in $D$ is equal to one.
\end{enumerate}
\end{definition}
 We call such a digraph $D$ a \emph{partial solution} at $v$. We say that $D$ is a \emph{witness} of the record $(R,S)$. Denote the set of all records for $v$ by $\mathcal R (v)$, then $|\mathcal{R}(v)|\leq 2^{\bigoh(k^2)}$.

 Intuitively, the set $S$ in a record is intended to capture all the vertices of $\delta(v)$ towards which there can be envy in the resulting solution. This is why we require all the vertices in $S$ to have the in-degree of one. Otherwise, there would be a possibility to remove items without changing the envy: note that the only relevant item for both agents is their shared edge. 

\iflong
In our algorithm, we will obtain records at each vertex by combining records of its children. For this, we need to ensure that the choices of $S$ in the records of siblings are consistent with respect to the shared vertices. We achieve this by allowing some freedom in the choice of $S$ for a given partial solution $D$: some of the vertices can be added to $S$ even if there is no envy towards them from the vertices of $T_v$, just to ensure that they have the in-degree of one in $D$ and hence can be envied by vertices from other subtrees in the combined solution. 
\fi

Note that if $v_i$ is a closed child of $v$, then by Observation \ref {obs:basiclfesproperties}, $\mathcal R(v_i)$ can contain only the records $(\{ v_iv\}, \emptyset)$, $(\{v_iv\}, \{v\})$, $(\{vv_i\}, \emptyset)$ and $(\{vv_i\}, \{v_i\})$.
The root $r$ of $T$ contains at most one record $(\emptyset, \emptyset)$, which happens if and only if the instance is a YES instance (otherwise $\mathcal R (r)=\emptyset$).

\iflong
\begin{observation}
Let $v_i$ be a closed child of $v$. Then:
\begin{itemize}
\item if $\mathcal R(v_i)$ contains $(\{ v_iv\}, \emptyset)$, then it also contains $(\{ v_iv\}, \{v\})$,
\item If $\mathcal R(v_i)$ contains $(\{ vv_i\}, \{v_i\})$, then it also contains $(\{vv_i\}, \emptyset)$.
\end{itemize}
\end{observation}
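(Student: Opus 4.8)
The plan is to prove both implications without constructing any new orientation: in each case I would reuse the witness digraph of the hypothesized record and merely adjust the set $S$, then re-verify the four conditions of \cref{def:record}. The structural input is \cref{obs:basiclfesproperties}(1): as $v_i$ is a closed child of $v$, we have $\delta(v_i)=\{v,v_i\}$ and $v_iv$ is the unique edge of $G$ with exactly one endpoint in $V_{v_i}$. Hence in any partial solution $D$ at $v_i$ the only arc incident to $v$ is the orientation of $v_iv$, so $v$ has in-degree $1$ in $D$ exactly when $R=\{v_iv\}$ and in-degree $0$ otherwise.

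For the first implication I would take a witness $D$ of $(\{v_iv\},\emptyset)$ and check that the same $D$ witnesses $(\{v_iv\},\{v\})$. Conditions 1 and 2 are unchanged. Condition 4 for $S=\{v\}$ holds because $R=\{v_iv\}$ makes $v$ the recipient of the shared edge, giving it in-degree exactly $1$ by the remark above. Condition 3 is the point where I use that the original $S$ was empty: an empty $S$ already forces $D$ to have no envy from any vertex of $V_{v_i}$ towards either boundary vertex, in particular none towards $v_i$, which is precisely what enlarging $S$ to $\{v\}$ demands. So $(\{v_iv\},\{v\})\in\mathcal R(v_i)$.

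For the second implication I would again reuse a witness $D$, now of $(\{vv_i\},\{v_i\})$, and show it witnesses $(\{vv_i\},\emptyset)$. The only thing to verify is condition 3 for $S=\emptyset$, i.e.\ that $D$ has no envy towards any boundary vertex; envy towards $v$ is already excluded because the original record's condition 3 would otherwise put $v$ into $\{v_i\}$. The crux — and the one genuinely nontrivial step of the whole observation — is to rule out envy towards $v_i$. For this I would exploit condition 4 of the original record: membership $v_i\in S$ forces the in-degree of $v_i$ to be $1$, and since $R=\{vv_i\}$ already supplies one incoming arc (the shared edge), no edge inside $V_{v_i}$ can be oriented towards $v_i$. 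Therefore $v_i$'s bundle in $D$ is exactly the single item $v_iv$. In the graph model this item is relevant only to its two endpoints, and every other $u\in V_{v_i}$ satisfies $u\neq v$ (as $v\notin V_{v_i}$), so $\mathcal V_u(\pi_{v_i})=0\le\mathcal V_u(\pi_u)$ and $u$ does not envy $v_i$. With no envy towards either boundary vertex, condition 3 holds for $S=\emptyset$ and condition 4 is vacuous, giving $(\{vv_i\},\emptyset)\in\mathcal R(v_i)$.

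I expect the first implication to be pure bookkeeping, while the real content sits in the second: one must notice that the in-degree-one constraint attached to $S$ pins $v_i$'s bundle down to the lone shared edge, after which the defining feature of the graph instances — that an edge-item carries value only for its two endpoints — removes all internal envy towards $v_i$. Everything else reduces to reading off \cref{def:record}.
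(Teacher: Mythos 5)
Your proof is correct and takes essentially the same approach as the paper's: the paper likewise reuses the given witness $D_i$ unchanged in both directions, noting for the first implication that $v$ has in-degree one (its only incident arc being $v_iv$), and for the second that the in-degree-one constraint on $v_i \in S$ together with the incoming arc $vv_i$ pins $v_i$'s bundle to the single shared edge, whence no $u \in V_{v_i}$ envies $v_i$. Your write-up merely makes explicit the step the paper leaves implicit, namely that an edge-item carries value only for its two endpoints.
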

\begin{proof}
 Assume that $(\{ v_iv\}, \emptyset) \in \mathcal R(v_i)$, and let $D_i$ be any witness of $(\{ v_iv\}, \emptyset)$ at $v_i$.
 Since the in-degree of $v$ in $D_i$ is equal to its one and only in-neighbor in $D_i$ is $v_i \in \delta(v_i)$, we conclude that $D_i$ is also a witness of $(\{ v_iv\}, \{v\})$ at $v_i$.

 For the second implication, let $D_i$ be any witness of $(\{ vv_i\}, \{v_i\})$ in $v_i$. Then the only in-neighbor of $v_i$ in $D_i$ is $v$, and hence there is no envy from any $u \in V_{v_i}$ towards $v_i$.
 In particular, $D_i$ is also a witness of $(\{ v_iv\}, \emptyset)$. 
\end{proof}

\fi
\ifshort
\begin{observation}[$\star$]
Let $v_i$ be a closed child of $v$. Then:
\begin{itemize}
\item if $(\{ v_iv\}, \emptyset) \in \mathcal R(v_i)$, then  $(\{ v_iv\}, \{v\}) \in \mathcal R(v_i)$,
\item If $(\{ vv_i\}, \{v_i\}) \in \mathcal R(v_i)$, then $(\{vv_i\}, \emptyset)\in \mathcal R(v_i)$.
\end{itemize}
\end{observation}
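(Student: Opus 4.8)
The plan is to reuse a single witnessing orientation in each implication and to change only the declared ``enviable'' set $S$, exploiting that a closed child has an almost trivial boundary. Since $v_i$ is closed, \Cref{obs:basiclfesproperties} gives $\delta(v_i)=\{v_i,v\}$ with $v_iv$ the unique edge of $G$ leaving $V_{v_i}$; consequently the only crossing arc a record at $v_i$ can report is the orientation of $v_iv$, and the two records in each implication share the same $R$ and differ only in $S$. I would verify Definition~\ref{def:record} condition by condition, observing that conditions~(1) and~(2) refer only to the digraph $D$ and so are inherited verbatim by the same witness; only conditions~(3) and~(4) require argument. Throughout I use that a vertex receives exactly the edge-items oriented towards it, so its bundle size equals its in-degree in $D$.

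For the first implication I take a witness $D_i$ of $(\{v_iv\},\emptyset)$. The reported arc $v_iv$ has $v$ as head, and $v_iv$ is the only edge incident to $v$ with an endpoint in $V_{v_i}$, so the in-degree of $v$ in $D_i$ equals one; this is exactly condition~(4) for the enlarged set $S=\{v\}$. Condition~(3) only weakens as $S$ grows, and for the original record (with $S=\emptyset$) it forced no vertex of $V_{v_i}$ to envy any vertex of $\delta(v_i)$, so it continues to hold. Hence the same $D_i$ witnesses $(\{v_iv\},\{v\})$.

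For the second implication I take a witness $D_i$ of $(\{vv_i\},\{v_i\})$; here condition~(4) forces the in-degree of $v_i$ to be one. The one genuinely nontrivial point is condition~(3) for the shrunken set $S=\emptyset$, where I must rule out envy from every $u\in V_{v_i}$ towards $v_i$. The reported arc $vv_i$ points into $v_i$, and since $v_i$ has in-degree one this is the only item in $v_i$'s bundle; as the edge $vv_i$ is relevant only to its two endpoints $v$ and $v_i$, and every $u\in V_{v_i}$ with $u\neq v_i$ also has $u\neq v$ (because $v\notin V_{v_i}$), each such $u$ values $v_i$'s bundle at $0$ and cannot envy it. Envy towards $v$ was already excluded by the original record since $v\notin\{v_i\}$, so condition~(3) holds with $S=\emptyset$, while condition~(4) is vacuous; thus $D_i$ witnesses $(\{vv_i\},\emptyset)$.

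The step I expect to cost the most care is this last envy argument, as it is the only place where I leave pure record-bookkeeping and appeal to the semantics of the problem, namely the fact underlying the record's intended meaning that the shared edge is the only item relevant to both of its endpoints. This is precisely why an in-degree-one head of a boundary edge holds a bundle that is worthless to every other agent in the subtree, which is what collapses the possibility of envy and lets $S$ be emptied.
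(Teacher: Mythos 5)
Your proposal is correct and follows essentially the same route as the paper's own proof: reuse the very same witness $D_i$ in each implication, note that for the enlarged set $S=\{v\}$ condition~(4) holds because $v_iv$ is the unique edge of $D_i$ incident to $v$ and is oriented towards it, and for the shrunken set $S=\emptyset$ rule out envy towards $v_i$ because its in-degree-one bundle is the single shared edge, which is irrelevant (value zero) to every other agent in $V_{v_i}$. Your write-up is in fact slightly more careful than the paper's, which leaves implicit both the exclusion of envy towards $v$ (already guaranteed by the original record since $v\notin\{v_i\}$) and the relevance argument for the shared edge.
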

\fi

\iflong
\begin{lemma}
\fi
\ifshort
\begin{lemma}[$\star$]
\fi
\label{lem:combRecordsnew} 
Let $v\in V(G)$ have $c>0$ children in $T$, and assume we have computed $\mathcal{R}(v_i)$ for each child $v_i$ of $v$. Then $\mathcal{R}(v)$ can be computed in time at most $2^{\bigoh (k^3)} \cdot c$.
\end{lemma}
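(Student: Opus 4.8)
The plan is to compute $\mathcal R(v)$ by combining the records of the children of $v$, treating the bounded ``core'' of $v$ separately from its possibly numerous closed children. First I would split the children into the open ones $v_1,\dots,v_t$, of which there are at most $2k$ by Observation~\ref{obs:basiclfesproperties}, and the closed ones. By the same observation each closed child $w$ is joined to the rest of $G$ by the single edge $vw$, so $\mathcal R(w)$ consists only of the four records $(\{wv\},\emptyset)$, $(\{wv\},\{v\})$, $(\{vw\},\emptyset)$, $(\{vw\},\{w\})$ listed after Definition~\ref{def:record}, and, crucially, no closed child is an endpoint of a non-tree edge through $v$. Hence every feedback edge through $v$ lies among $v$, the open-children boundaries and $N_G(v)$, and the entire \emph{core configuration} — the orientation of the edges incident to $v$ that do not go to closed children, the orientation of the at most $k-1$ feedback edges through $v$, and a guess of the output set $S\subseteq\delta(v)$ — ranges over only $2^{\bigoh(k^2)}$ possibilities.

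For each core configuration I would then iterate over all tuples assigning one record to each open child; since $|\mathcal R(v_i)|\le 2^{\bigoh(k^2)}$ and $t\le 2k$, this costs a factor $2^{\bigoh(k^3)}$. For a fixed tuple I would first check consistency — the chosen records must agree on the orientation of every shared boundary vertex and of every feedback edge, each of which appears as a boundary arc in the records of the two subtrees it connects — and then verify EFX for every pair of agents finalised inside the core. The point that makes this verifiable from $(R,S)$ alone is that in the graph setting the only good two distinct agents both value is their common edge: envy of $u$ towards $u'$ across an edge oriented to $u'$ is EFX-safe exactly when either $u$ does not envy $u'$, which is recorded by $u'\notin S$ in the child containing $u$, or $u'$ receives only that one edge, which I enforce by keeping $u'$ in $S$ and propagating the in-degree-one requirement (condition~(4) of Definition~\ref{def:record}). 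Consequently no numeric bundle values need to be stored in a record.

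With the core fixed and validated I would fold in the closed children greedily in $\bigoh(c)$ time. A closed child $w$ offers, according to which of its four records exist, the moves: orient $vw$ towards $v$ with no envy ($(\{wv\},\emptyset)$); orient towards $v$ with $w$ envying $v$ ($(\{wv\},\{v\})$, which forces $v$ to have in-degree one); orient away with $w$ keeping only $vw$ ($(\{vw\},\{w\})$, making EFX towards $w$ automatic); or orient away with $w$ possibly taking more ($(\{vw\},\emptyset)$, EFX-safe only if $\mathcal V_v(\pi_v)$ is already at least $v$'s value for $vw$). If the core puts $v\in S$, so $v$ must have in-degree one, every closed child has to be oriented away and I check that each admits the singleton record; otherwise $v$ may absorb edges, and since orienting $w$ towards $v$ via $(\{wv\},\emptyset)$ only increases $\mathcal V_v(\pi_v)$ and never creates envy towards $v$, orienting towards $v$ whenever that record exists is optimal and maximises $\mathcal V_v(\pi_v)$, giving the best chance of satisfying the remaining $(\{vw\},\emptyset)$ children. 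Reading off the resulting $(R,S)$ and collecting these over all configurations yields $\mathcal R(v)$.

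The main obstacle, and the part needing the most care, is proving that the combinatorial records are lossless for deciding EFX: I must show that whenever an agent can be envied it both suffices and is necessary to force its bundle down to the single contested edge through the in-degree-one condition, and that the freedom to insert extra vertices into $S$ lets the records of siblings be reconciled without ever tracking bundle sizes — the only place a value comparison survives, the $(\{vw\},\emptyset)$ closed children, being resolved locally at $v$. Multiplying the $2^{\bigoh(k^2)}$ core configurations by the $2^{\bigoh(k^3)}$ open-record tuples and the $\bigoh(c)$ greedy sweep, and absorbing the $\mathrm{poly}(k)$ per-step bookkeeping into the exponential factor, gives the claimed running time $2^{\bigoh(k^3)}\cdot c$.
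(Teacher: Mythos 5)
Your overall architecture matches the paper's proof: split children into at most $2k$ open ones and arbitrarily many closed ones, branch over the $2^{\bigoh(k^2)}$ records of each open child and the orientations of edges meeting $\delta(v)$ (giving the $2^{\bigoh(k^3)}$ factor), exploit the fact that each closed child has only the four records $(\{v_iv\},\emptyset)$, $(\{v_iv\},\{v\})$, $(\{vv_i\},\emptyset)$, $(\{vv_i\},\{v_i\})$, and use monotonicity of $\mathcal V_v$ to greedily absorb into $v$'s bundle every closed-child edge whose subtree admits the record $(\{v_iv\},\emptyset)$ --- this greedy is exactly the paper's ``two or more in-neighbors'' case, and your justification for why it is safe (no vertex inside a closed subtree values any item of $v$'s bundle other than the shared edge) is sound.

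However, your dichotomy on the closed children --- ``if $v\in S$ then every closed child is oriented away; otherwise run the greedy'' --- has a genuine completeness gap: it never produces a record witnessed by a partial solution in which the \emph{unique} in-neighbor of $v$ is a closed child. The paper devotes a separate branch (its \textbf{Case 1'}) to precisely this situation. The gap is not hypothetical: take a closed child $w$ that is a leaf with $\mathcal V_w(\{wv\})>0$. Then $\mathcal R(w)$ contains $(\{wv\},\{v\})$ but \emph{not} $(\{wv\},\emptyset)$, since orienting $wv$ into $v$ always makes $w$ envy $v$; the one-directional observation after Definition~\ref{def:record} only gives $(\{wv\},\emptyset)\Rightarrow(\{wv\},\{v\})$, not the converse. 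If the only EFX orientation of $G$ gives $v$ the single edge $wv$ (e.g.\ because all other edges at $v$ must be oriented away to satisfy other agents, while $v$ needs a nonempty bundle to avoid envying a neighbor whose bundle cannot be shrunk to in-degree one), then in your scheme the $v\in S$ guess forces $wv$ away from $v$, and the $v\notin S$ greedy refuses to absorb $wv$ because $(\{wv\},\emptyset)\notin\mathcal R(w)$; you would report NO on a YES instance. The fix is exactly the paper's extra branch over which single closed child $v_i$ with $(\{v_iv\},\{v\})\in\mathcal R(v_i)$ supplies $v$'s lone incoming arc, with all other closed children oriented away and $v$'s value $s_i=\mathcal V_v(\{v_iv\})$ compared against the edges of the remaining children and non-closed neighbors (this can still be implemented within the $\bigoh(c)$ sweep by precomputing maxima, preserving your $2^{\bigoh(k^3)}\cdot c$ bound). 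Relatedly, you should also let this branch emit both $(R,S)$ and $(R,S\cup\{v\})$ depending on whether $(\{v_iv\},\emptyset)$ is available, since marking $v$ enviable is needed for reconciliation with $v$'s siblings even when no envy towards $v$ exists inside $T_v$.
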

\iflong
\begin{proof}
\fi
\ifshort
\begin{proof}[Proof sketch]
\fi
 Without loss of generality, let the open children of $v\in V(G)$ be $v_1,\dots,v_t$, then $t\leq 2k$ by Point 3 of Observation~\ref{obs:basiclfesproperties}. Let $C$ be the set of remaining (closed) children of $v$, i.e. $C=\{v_{t+1},\dots,v_c\}$. Denote $V_i=V_{v_i}$, $\bar V_i=\bar V_{v_i}$, $i\in[c]$, and $V_0= \delta(v) \setminus \cup_{i=1}^c V_i$.
 Note that all $V_i$, $i\in [c]_0$, are pairwise disjoint. 
 If the record set of some child is empty, we conclude that $\mathcal{R}(v)=\emptyset$. Otherwise, we branch over all choices of $(R_i, S_i) \in \mathcal{R}(v_i)$ for each individual open child $v_i$ of $v$. We also branch over all orientations $R_0$ of edges $\{uv|u\in V_0\}$. 
 
 Let $R'=\bigcup_{j\in [t]_0} R_j$. We branch over subsets $S_0$  of $V_0 \setminus \{v\}$ with precisely one incoming arc in $R'$. Let $S'=\bigcup_{j\in [t]_0} S_j$, 
 if $R'$ is not anti-symmetric or contains two different arcs $u_1w$ and $u_2w$ for some $w\in S'$, we discard this branch. We also discard it if $V_i \cap S' \neq V_i \cap S_i$ for some $i \in [t]$. 
Otherwise, we create a trial record $(R,S)$, where $R$ is the restriction of $R'$ to those arcs which have precisely one endpoint in $V_v$ and $S=S' \cap \delta (v)$.
\iflong 
Intuitively, to check whether $(R,S)$ is a record at $v$, we will try to choose records for the closed children such that their partial solutions agree with $(R_i, S_i)$, $i\in [t]_0$. 
As the number of closed children is not bounded by a parameter, we can not branch over all combinations of their records. Instead of this, we branch over the following cases to model all possible in-neighborhoods of $v$ in the partial solution. 
\fi
\ifshort
We branch over 4 options for the in-neighbors of $v$ in the partial solution, and in each case try to choose suitable records for the closed children. 
\fi

 \textbf {Case 0: no in-neighbors.} If $v$ has no in-neighbors in $R'$, and every closed child $v_i$ contains the record $(\{vv_i\}, \{v_i\})$ (or $(\{vv_i\}, \emptyset)$ if $V_v(\{v_iv\})=0$ ), and every $w\in N_G(v) \setminus C$ with $\mathcal V_v(\{wv\})>0$ is in $S'$, we add the record $(R,S)$. 

 \textbf {Case 1: one open in-neighbor.} If there is precisely one incoming arc $uv$ to $v$ in $R'$, and for every closed child $v_i$ of $v$, $\mathcal R(v_i)$ contains the record $(\{vv_i\},\emptyset)$, let $s_1=\mathcal V_v(\{uv\})$. If $s_1<\mathcal V_v(\{v_iv\})$ for some $v_i\in C$, and $\mathcal R(v_i)$ does not contain the record $(\{vv_i\},\{v_i\})$, we discard this case\iflong, since this would mean that $v$ envies $v_i$  while $v_i$ receives more than one edge in any partial solution\fi. We also discard it if $s_1<\mathcal V_v(\{wv\})$ for some $w\in N_G(v) \setminus (C\cup S')$. 
Otherwise, we add the records $(R,S)$ and $(R,S \cup \{v\})$.

 \textbf {Case 1': one closed in-neighbor.} 
If $R'$ has no incoming arcs $uv$ to $v$, we model partial solutions where the unique in-neighbor of $v$ is one of its closed children. We branch over the choices of this closed child $v_i$. For every fixed $v_i \in C$ such that $\mathcal R(v_i)$ contains the record $(\{v_iv\}, \{v\})$, we compare $s_i=V_v(\{v_iv\})$ with all the values $s_j$ over the rest of the closed children $v_j$ of $v$. If $s_j>s_i$ for some $j$ and $\mathcal R(v_j)$ does not contain the record $(\{vv_j\},\{v_j\})$, we discard the case. We also discard it if for some $j \neq i$, $\mathcal R(v_j)$ does not contain $(\{vv_j\},\emptyset)$. Otherwise, we compare $s_i$ with $\mathcal V_v(\{wv\})$ for every $w\in N_G(v) \setminus C$. If for some $w \not \in S'$ the latter is larger, we discard the branch as well. Otherwise, we add the record $(R,S  \cup\{v\})$. If $\mathcal R(v_i)$ contains the record $(\{v_iv\}, \emptyset)$, we additionaly add $(R,S)$. 

\textbf {Case 2: two or more in-neighbors.} Finally, if $v \not \in S'$, we model the subcase when $v$ has more than one in-neighbor in a partial solution. For every closed child 
$v_i$ of $v$ such that $(\{v_iv\}, \emptyset) \not \in \mathcal{R}(v_i)$, $vv_i$ must be oriented towards $v_i$. On the other hand, for every $v_i \in C$ such that $\mathcal{R}(v_i)$ contains the record $(\{v_iv\}, \emptyset)$, by monotonicity of $\mathcal V_v$  we can just greedily orient the edge $vv_i$ towards $v$. The corresponding value of $v$ will be $s=\mathcal V_v(\{wv|wv \in R'\} \cup \{v_iv|v_i\in C, (\{v_iv\}, \emptyset)\in \mathcal{R}(v_i)\}$. If there is closed child $v_i$ such that $s<V_v(\{v_iv\})$ and $\mathcal R(v_i)$ does not contain the record $(\{vv_i\},\{v_i\})$, we discard the branch. Moreover, if there is closed child $v_i$ such that $(\{v_iv\}, \emptyset) \not \in \mathcal{R}(v_i)$ and $(\{vv_i\},\emptyset) \not \in \mathcal R(v_i)$, we discard the branch. We also discard it if $s<V_v(\{vw\})$ for some $w \in N_G(v)\setminus C$ such that $w\not \in S'$. Otherwise, we add the record $(R,S)$. 

\iflong
For the running time, recall that in order to construct $\mathcal R (v)$ the algorithm branched over the choice of at most $(2k+1)$ many binary relations $R_i$ on the boundaries of open children and $v$ itself. According to Observation~\ref{obs:basiclfesproperties}.2, there are at most $\bigoh(2^{(2k+2)^2)})$ options for every such relation, which dominates the number of possible choices of subsets $S_i$ of the boundaries.
Therefore, we have at most $\bigoh((2^{(2k+2)^2})^{2k+1}  \le 2^{\bigoh(k^3)}$ branches. In particular, this dominates the time required to check, for a fixed $(R', S')$, whether the brunch should be kept and proceed to the closed children. In the latter case, we need at most linear in $c$ time to compute the values $s_1$ and $s_2$ and check all potential envies between $v$ and its closed children. Hence, $\mathcal R(v)$ can be computed in time  $2^{\bigoh (k^3)}\cdot c$.
\fi

\ifshort
For the running time, note that we branch over the choice of at most $2k+1$ many binary relations $R_i$, and there are at most $\bigoh(2^{(2k+2)^2)})$ options for every such relation, which dominates the number of possible choices of subsets $S_i$ of the boundaries.
Therefore, we have at most $\bigoh((2^{(2k+2)^2})^{2k+1}  \le 2^{\bigoh(k^3)}$ branches. In each branch, we need at most linear in $c$ time to traverse closed children of $v$. Hence, $\mathcal R(v)$ can be computed in time  $2^{\bigoh (k^3)}\cdot c$.
\fi

To establish the correctness of the procedure described above, which we will refer to as CRC($v$) (combining records of children of $v$), we prove the following two claims:

\iflong 
\begin{claim}
\fi
\ifshort 
\begin{claim}[$\star$]
\fi
\label{cl:combRecords1}
If $(R^*,S^*)\in \mathcal R(v)$, then CRC($v$) adds it. 
\end{claim}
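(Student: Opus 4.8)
The plan is to reverse-engineer, from a witness of $(R^*,S^*)$, the single branch of CRC($v$) that outputs this record. Fix a partial solution $D^*$ at $v$ witnessing $(R^*,S^*)$. For each open child $v_i$, $i\in[t]$, let $D_i$ be the restriction of $D^*$ to the vertices of $\bar V_{v_i}$ and the edges having at least one endpoint in $V_i$. Since $D^*$ is EFX between any two vertices of $V_v\supseteq V_i$ and the only edges leaving $V_i$ are collected in $R_i:=\{\text{arcs of }D^*\text{ with exactly one endpoint in }V_i\}$, the digraph $D_i$ is a partial solution at $v_i$; choosing its second coordinate $S_i$ to contain precisely the vertices of $(\delta(v_i)\setminus\{v\})\cap S^*$ together with every vertex of $\delta(v_i)$ envied from inside $V_i$ in $D^*$ (and $v$ itself only if $v$ is envied from inside $V_i$), all four conditions of Definition~\ref{def:record} hold, because each such vertex has in-degree one in $D^*$ and keeps it in $D_i$. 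Thus $(R_i,S_i)\in\mathcal R(v_i)$, and this is the record CRC($v$) guesses for $v_i$. Likewise let $R_0$ be the orientation of $\{uv\mid u\in V_0\}$ read off from $D^*$, and let $S_0:=S^*\cap(V_0\setminus\{v\})$; each vertex of $S_0$ has exactly one incoming arc in $R_0$, so this is a legal choice.

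I would then observe that for these guesses $R'=\bigcup_{j\in[t]_0}R_j$ is anti-symmetric, has no boundary vertex with two incoming arcs, and satisfies $V_i\cap S'=V_i\cap S_i$ for all $i\in[t]$, so none of the preliminary consistency tests fire. The trial $R$ is the set of arcs of $R'$ with exactly one endpoint in $V_v$, which by construction is exactly $R^*$; and $S=S'\cap\delta(v)$ agrees with $S^*$ on $\delta(v)\setminus\{v\}$, by Condition~3 of $(R^*,S^*)$ and our choice of the $S_i$'s. Whether $v$ itself lies in $S$ is then settled by the case analysis, whose correct branch is dictated by the in-neighbourhood of $v$ in $D^*$: Case~0 if $v$ receives no edge, Case~1 or Case~1$'$ if it receives exactly one edge (from an open child or a vertex of $V_0\cup N_G(v)$, respectively from a closed child), and Case~2 if it receives at least two edges. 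Note that $v\in S^*$ forces in-degree one of $v$ by Condition~4, so $v$ can only be inserted into $S$ in Cases~1 and~1$'$, which is exactly where the branches producing $(R,S\cup\{v\})$ occur; hence the record eventually produced is precisely $(R^*,S^*)$.

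The main obstacle, and the real content, is to show that in the selected case every closed-child test is passed. The key is that these tests are nothing but the EFX conditions between $v$ and its closed children, which hold because $D^*$ is EFX. Fix a closed child $v_i\in C$; by Observation~\ref{obs:basiclfesproperties} the only edge leaving $V_i$ is $v_iv$, so $v$ and $v_i$ share only the item $v_iv$, and $v$ envies $v_i$ in $D^*$ exactly when $v_i$ receives $v_iv$ and $\mathcal V_v(\{v_iv\})>\mathcal V_v(\pi_v)$. Whenever this occurs, EFX of $D^*$ forces $v_i$ to receive nothing else, i.e. $v_i$ has in-degree one, so the restriction of $D^*$ to $V_i$ witnesses $(\{vv_i\},\{v_i\})\in\mathcal R(v_i)$ — exactly the record CRC demands before it agrees not to discard. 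Symmetrically, the requirements that $\mathcal R(v_i)$ contain $(\{vv_i\},\emptyset)$ or $(\{v_iv\},\{v\})$ are met because $D^*$ restricted to $V_i$ already witnesses them, and the numerical comparisons performed by CRC — $s$ against $\mathcal V_v(\{v_iv\})$ for closed children and against $\mathcal V_v(\{wv\})$ for $w\in N_G(v)\setminus C$ — succeed because $D^*$ is EFX and, by Condition~3, every boundary vertex that $v$ envies already lies in $S^*$. Finally, in Case~2 monotonicity of $\mathcal V_v$ justifies CRC's greedy reorientation of the edges $vv_i$ towards $v$: it can only raise $s$, and since $v_i\notin\delta(v)$ and $v_iv$ is internal to $V_v$, it changes neither $R$ nor $S$; hence it keeps the branch alive without altering the output. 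Combining these observations, CRC($v$) adds $(R^*,S^*)$, as claimed.
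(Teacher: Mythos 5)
Your overall strategy is the same as the paper's: restrict the witness $D^*$ to each child to manufacture records $(R_i,S_i)$, argue that the branch of CRC($v$) guessing exactly these records survives all consistency tests, and then match the in-neighbourhood of $v$ in $D^*$ to Cases 0, 1, 1$'$, 2. However, your definition of $S_i$ has a genuine flaw, in two ways, and the flaw is precisely at the step you flag as routine. First, you put all of $(\delta(v_i)\setminus\{v\})\cap S^*$ into $S_i$ and assert that each such vertex ``has in-degree one in $D^*$ and keeps it in $D_i$.'' That is false for boundary vertices outside $V_i$: a vertex $w\in\delta(v_i)\setminus V_i$ with $w\in S^*$ may receive its unique incoming arc of $D^*$ from a different subtree $V_j$ or from an edge of $R_0$; then $D_i$ contains no arc into $w$ at all, so Point 4 of Definition~\ref{def:record} fails. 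Worse, since every arc into $w\notin V_{v_i}$ in any orientation at $v_i$ must cross the boundary and hence lie in $R_i$, \emph{no} witness with crossing-arc set $R_i$ can give $w$ in-degree one, so $(R_i,S_i)$ is simply not an element of $\mathcal R(v_i)$ and CRC($v$) never branches on it --- the branch you ``select'' does not exist.

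Second, you admit a vertex of $\delta(v_i)$ into $S_i$ only if it is envied \emph{from inside $V_i$}, and this breaks the test $V_i\cap S'=V_i\cap S_i$ that CRC($v$) performs. Take $w\in V_i\cap\delta(v_j)$ with $w\notin\delta(v)$ (so $w\notin S^*$), and suppose the only envy towards $w$ in $D^*$ comes from some $u\in V_j$, $j\neq i$. Your rule yields $w\in S_j\subseteq S'$ but $w\notin S_i$, so the branch is discarded --- again exactly the branch needed to produce $(R^*,S^*)$. The paper avoids both problems by defining $S_i$ as the set of $w\in\delta(v_i)$ whose in-degree \emph{in $D_i$} equals one and which either lie in $S^*$ or are envied by \emph{some $u\in V_v$} (envy from anywhere in the subtree at $v$, not only from $V_i$); with that definition one verifies that each $(R_i,S_i)$ is a record witnessed by $D_i$ and that all consistency tests pass. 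Two smaller points: the branching condition on $S_0$ requires one incoming arc in $R'$, not in $R_0$ (a vertex of $V_0$ may receive its arc from an open child's $R_i$); and your treatment of the closed-children tests in Cases 1$'$ and 2 is asserted rather than argued --- e.g.\ in Case 2 you still need that any closed child $v_i$ with $(\{v_iv\},\emptyset)\notin\mathcal R(v_i)$ has its edge oriented towards $v_i$ in $D^*$ (otherwise $D_i$ would witness that very record, since EFX forbids $v_i$ envying a vertex $v$ of in-degree at least two), which is what legitimises comparing envies against the greedily enlarged value $s$.
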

\iflong
\begin{proof}[Proof of the Claim]
Assume that $(R^*,S^*)$ is a record at $v$, and let $D$ be its witness. Let $D_0$ be the restriction of $D$ to the vertex set $V_0$ and to those arcs which are incident to $v$. For each $i \in [c]$, let $D_i$ be the restriction of $D$ to the vertex set $\bar V_i$ and to those arcs that have at least one endpoint in $V_i$. Let $R_i$ be the set of arcs of $D_i$ with precisely one endpoint in $V_i$. Let $S_i$ be the set of vertices $w \in \delta(v_i)$ such that the in-degree of $w$ in $D_i$ is equal to one and either $w \in S^*$ or some $u \in V_v$ envies $w$ in $D$. 
We claim that for each $i \in [c]$, $(R_i, S_i)$ is a record at $v_i$ and $D_i$ is a partial solution witnessing it. 

Indeed, Point. 2 and Point. 4 of Definition \ref{def:record} hold by construction. Since $D$ is a partial solution at $v$ and $D_i$ contains all the arcs of $D$ incident to $V_i$, the allocation defined by $D_i$ is EFX between any two vertices in $V_i$, so Point. 1 holds as well. Towards Point. 3, note that if $w\in \delta (v_i)$ and $u \in V_v$ envies $w$ in $D$, then since $D$ is a partial solution, the in-degree of $w$ in $D$ is equal to one. Hence, $S_i$ contains $w$ by construction. Therefore, $(R_i, S_i)$ is a record at $v_i$ and $D_i$ is a partial solution witnessing it.

By the assumptions of Lemma \ref{lem:combRecordsnew}, there will be some branch in CRC($v$)where:
\begin{itemize}
    \item the records $(R_i, S_i)$ for each $i \in [t]$ are chosen,
    \item $R_0$ is the arc set of $D_0$,
    \item $R'=\bigcup_{j\in [t]_0} R_j$,
    \item $S_0=(V_0 \setminus \{v\})\cap S^*$ (note that all these vertices have precisely one incoming arc in $D$, and it belongs to $R'$),
    \item $S'=\bigcup_{j\in [t]_0} S_j$.
\end{itemize}
Here $R'$ is a subset of arcs of $D$, so it is anti-symmetric. We will ensure that $R'$ contains at most one incoming arc to each $w \in S'$, by showing that the in-degree of every such $w$ in $D$ is equal to one. If $w \in S^*$, this holds by definition of the partial solution. If $w \in S' \setminus S^*$, then $w$ belongs to $S_i$ for some $i \in [t]$. By construction of $S_i$, there is envy from some $u \in V_v$ towards $w$. Since $w \not \in S^*$, we conclude that $w \not \in \delta(v)$, in particular, $w \in V_v$. Since the allocation defined by $D$ is EFX if restricted to $V_v$, the in-degree of $w$ in $D$ is equal to one in this case as well.

For every $i\in [t]$, we will show that $V_i \cap S' = V_i \cap S_i$. Consider any $w\in V_i \cap S'$. As $S_0 \subseteq V_0$ and $V_0 \cap V_i =\emptyset$, we conclude that $w \in S_j$ for some $j \in [t]$, then its in-degree in $D_j$ is equal to one. Moreover, either $w \in S^*$ or some $u\in V_v$ envies $w$ in $D$. In either case, the in-degree of $w$ in $D$ (and hence in $D_i$, as $w \in V_i$) is equal to one. Hence, $w \in S_i$. 

Therefore, this branch of CRC($v$)is not discarded and we obtain $R$ and $S$ by restricting $R'$ to the arcs with precisely one endpoint in $V_v$ and $S'$ to $\delta (v)$, correspondingly. By construction, $R'$ contains all the arcs of $D$ with precisely one endpoint in $V_v$. Since $R$ is obtained by restricting $R'$ to those arcs and $D$ is a witness of $(R^*, S^*)$, we have that $R^* =R$. 

\begin{observation}
\label{obs: eqSSstar}
$S^* \in \{S,  S \cup \{v\}\}$.
\end{observation}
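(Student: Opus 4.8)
The plan is to prove the two set-equalities that force $S^*$ to differ from the constructed $S$ only possibly in the single element $v$. Concretely, I would first show $S^* \setminus \{v\} = S \setminus \{v\}$ by establishing both inclusions for an arbitrary $w \in \delta(v)$ with $w \neq v$, and then separately argue that $v \in S$ already implies $v \in S^*$. Granting these, writing $S^* = (S^* \setminus \{v\}) \cup (S^* \cap \{v\}) = (S \setminus \{v\}) \cup (S^* \cap \{v\})$ immediately gives $S^* = S$ when $v \notin S^*$ (since then $v \notin S$ too, by the second claim) and $S^* = S \cup \{v\}$ when $v \in S^*$; in both cases $S^* \in \{S, S \cup \{v\}\}$.

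For the inclusion $S^* \setminus \{v\} \subseteq S$, I would fix $w \in S^* \setminus \{v\}$; since $S^* \subseteq \delta(v)$ it suffices to show $w \in S'$. Here I would use the boundary structure: by Point 1 of Observation~\ref{obs:basiclfesproperties} the only edge leaving the subtree of a closed child is its tree edge to $v$, so no vertex of a closed child's $V_i$ lies in $\delta(v)$; hence $w$ lies either in $V_0 \setminus \{v\}$ or in $V_i$ for some open child $v_i$. In the first case $w \in S_0 = (V_0 \setminus \{v\}) \cap S^*$. In the second case, $w \in V_i \cap \delta(v)$ forces $w \in \delta(v_i)$ (the edge witnessing $w \in \delta(v)$ also leaves $V_{v_i}$), and by Point 4 of Definition~\ref{def:record} the in-degree of $w$ in $D$ — equivalently in $D_i$, since $D_i$ carries every arc incident to $V_i$ — equals one; together with $w \in S^*$ this places $w$ in $S_i$. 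Either way $w \in S'$, so $w \in S$.

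For the reverse inclusion $S \setminus \{v\} \subseteq S^*$, I would fix $w \in S$ with $w \neq v$; then $w \in \delta(v)$ and $w \in S_j$ for some $j \in [t]_0$. If $j = 0$ then $w \in S_0 \subseteq S^*$ by construction. If $j$ indexes an open child, then unwinding the definition of $S_j$ gives either $w \in S^*$ (done) or that some $u \in V_v$ envies $w$ in $D$; in the latter case, since $w \in \delta(v)$, Point 3 of Definition~\ref{def:record} applied to the witness $D$ of $(R^*,S^*)$ yields $w \in S^*$. This completes $S^* \setminus \{v\} = S \setminus \{v\}$.

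The step I expect to be the most delicate is the vertex $v$ itself, because $v$ may belong to $\delta(v)$ and could in principle enter $S$ through an open child's set $S_i$ without being flagged in $S^*$; if that happened, the claim could fail. I would rule this out as follows: if $v \in S$ then $v \in S_i \subseteq \delta(v_i)$ for some open child $v_i$, so by the definition of $S_i$ either $v \in S^*$ already, or some $u \in V_v$ envies $v$ in $D$. In the latter subcase, since $v \in \delta(v)$ (it must be, as $v \in S \subseteq \delta(v)$), Point 3 of Definition~\ref{def:record} again gives $v \in S^*$. Hence $v \in S$ implies $v \in S^*$, which closes the argument and establishes $S^* \in \{S, S \cup \{v\}\}$.
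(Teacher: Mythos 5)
Your proof is correct and takes essentially the same route as the paper's: the paper establishes the chain $S \subseteq S^* \subseteq S \cup \{v\}$ directly, and your decomposition into $S^* \setminus \{v\} = S \setminus \{v\}$ plus the implication $v \in S \Rightarrow v \in S^*$ is just a repackaging of those same two inclusions, proved with the same ingredients (the branch choices $S_0 = (V_0 \setminus \{v\}) \cap S^*$ and the construction of the $S_i$, together with Points 3 and 4 of Definition~\ref{def:record} applied to the witness $D$). Your explicit justification that no vertex of $\delta(v)\setminus\{v\}$ lies in a closed child's subtree, and that $w \in V_i \cap \delta(v)$ forces $w \in \delta(v_i)$, merely spells out steps the paper leaves implicit.
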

\begin{proof}
Consider any vertex $w\in S^* \setminus \{v\}$, then the in-degree of $w$ in $D$ is equal to one. If $w\in V_0 \setminus \{v\}$, by the choice of $S_0$ we have that $w \in S_0 \subseteq S'$. If $w \in V_i$ for some $i \in [t]$, the unique in-neighbor of $w$ belongs to $D_i$ and hence $w \in S_i \subseteq S'$. As $w \in S^* \subseteq \delta (v)$, we conclude that $w \in S' \cap \delta (v) =S$. Therefore,
$S^* \subseteq S \cup \{v\}$.

Conversely, consider any vertex $w\in S$. If $w\in S_0$, it belongs to $S^*$. If $w\in S_i$ for some $i \in[t]$, 
it follows from the construction of $S_i$ that either $w\in S^*$ or some $u\in V_v$ envies $w$ in $D$. In the latter case $w\in S^*$ as well, since $w \in S \subseteq \delta(v)$ and $D$ is a witness of $(R^*, S^*)$. Hence, $S \subseteq S^*$. 
\end{proof} 

\begin{observation}
\label {obs: v_envy_S'}
If $v$ envies some $w \not \in C$ in $D$, then $w\in S'$.
\end{observation}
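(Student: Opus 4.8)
The plan is to translate the envy condition into a statement about a single shared edge and then locate $w$ among the blocks $V_0,V_1,\dots,V_c$. In the graph setting the only good relevant to both $v$ and $w$ is the edge $vw$, so if $v$ envies $w$ in $D$ then $vw\in E(G)$, it is oriented towards $w$, and $\mathcal V_v(\pi_w)=\mathcal V_v(\{vw\})>\mathcal V_v(\pi_v)$, where $\pi_v,\pi_w$ denote the bundles of $v$ and $w$ in the allocation defined by $D$. In particular $w\in N_G(v)$ and $w\neq v$, so $w$ lies either outside $V_v$ (in which case $w\in V_0$, since then $vw$ crosses the boundary and $w\in\delta(v)\setminus\bigcup_{i=1}^c V_i$) or inside $V_{v_i}$ for some child $v_i$.

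First I would rule out the closed-child subtrees. If $v_i\in C$ is a closed child then, by Point 1 of Observation~\ref{obs:basiclfesproperties}, the tree edge $vv_i$ is the \emph{only} edge of $G$ leaving $V_{v_i}$, so the sole vertex of $V_{v_i}$ adjacent to $v$ is $v_i$ itself. Since $w\notin C$ we have $w\neq v_i$, and therefore $w\notin V_{v_i}$ for every closed child. Consequently $w\in V_0$ or $w\in V_i$ for some open child $v_i$ with $i\in[t]$, and I would finish by treating these two cases separately.

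For $w\in V_0$: here $w\in\delta(v)$ and $w\neq v$ (as $v$ cannot envy itself). Applying Point 3 of Definition~\ref{def:record} to the witness $D$ of $(R^*,S^*)$, the envy from $v\in V_v$ towards $w$ forces $w\in S^*$; hence $w\in(V_0\setminus\{v\})\cap S^*=S_0\subseteq S'$. For $w\in V_i$ with $v_i$ open: the crossing edge $vw$ (with $v\notin V_i$) witnesses $w\in\delta(v_i)$. Since $v$ and $w$ both lie in $V_v$ and the allocation defined by $D$ is EFX on $V_v$, I claim the in-degree of $w$ in $D$ equals one: if $w$ received any further edge $e\neq vw$, then removing $e$ would leave $vw$ in $w$'s bundle, so $\mathcal V_v(\pi_w\setminus\{e\})=\mathcal V_v(\{vw\})>\mathcal V_v(\pi_v)$, contradicting EFX between $v$ and $w$. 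As $w\in V_i$, all arcs incident to $w$ belong to $D_i$, so its in-degree in $D_i$ is also one; combined with the envy from $v\in V_v$, the construction of $S_i$ then places $w\in S_i\subseteq S'$.

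The main obstacle is the closed-child case: one must be certain that $v$ cannot envy a deep vertex buried inside a closed child's subtree, and this is exactly where the structural fact that only the tree edge crosses a closed boundary (Point 1 of Observation~\ref{obs:basiclfesproperties}) is indispensable. A secondary careful point is extracting the in-degree-one property in the open-child case from EFX, which hinges on the graph-specific fact that $vw$ is the unique good relevant to both $v$ and $w$; everything else is bookkeeping over the definitions of $S_0$, $S_i$ and $S'$ fixed by the chosen branch.
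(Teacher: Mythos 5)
Your proof is correct and takes essentially the same route as the paper's: the case $w\in V_0$ is settled by Point~3 of Definition~\ref{def:record} giving $w\in S^*$ (you route through the branch's choice $S_0=(V_0\setminus\{v\})\cap S^*$ where the paper routes through $S^*\subseteq S\cup\{v\}$, a trivial variation), and the open-child case uses EFX on $V_v$ to force in-degree one for $w$ in $D_i$, placing $w\in S_i\subseteq S'$. If anything, your write-up is slightly more careful than the paper's, which asserts ``otherwise $w=v_i$'' while your argument correctly allows any $w\in V_i\cap\delta(v_i)$ reached by a local feedback edge and explicitly rules out closed subtrees via Point~1 of Observation~\ref{obs:basiclfesproperties}.
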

\begin{proof}
If $w \in V_0$, then $w \in \delta(v)$. Since $D$ is partial solution witnessing $(R^*, S^*)$, $w$ belongs to $S^*$. By Observation \ref{obs: eqSSstar} we have that $S^* \subseteq S \cup \{v\}$, so $w \in S \subseteq S'$.

Otherwise $w=v_i$ for some $i\in[t]$. As $v,w \in V_v$ and $D$ define an EFX allocation between vertices of $V_v$, the in-degree of $w$ in $D$ (and hence in $D_i$) is equal to one. Since $w=v_i \in \delta (v_i)$ and $v\in V_v$ envies $w$ in $D$, $S_i$ contains $w$ by construction. Therefore, every such $w$ belongs to $S'$. 
\end{proof} 

Based on the in-neighborhood of $v$ in $D$, we distinguish the following cases. If $v$ has no in-neighbors in $D$, we claim that $(R^*, S^*)$ was added to $\mathcal R(v)$ in \textbf{Case 0} of CRC($v$). Indeed, as the in-degree of $v$ is equal to zero, $v \not \in S^*$ and hence $S=S^*$. For every closed child $v_i$ with $V_v(\{v_iv\})>0$, $v$ envies $v_i$ in $D$ and hence the in-degree of $v_i$ in both $D$ and $D_i$ is equal to one. Therefore, $D_i$ witnesses that $(\{vv_i\}, \{v_i\})\in \mathcal R(v_i)$.
For every other closed child $v_i$, $D_i$ witnesses that $(\{vv_i\}, \emptyset)\in \mathcal R(v_i)$.
Moreover, for every $w \in N_G(v) \setminus C$ with $V_v(\{wv\})>0$, we have that $v$ envies $w$ in $D$, so $w \in S'$ by Observation \ref{obs: v_envy_S'}. Hence, CRC($v$)adds $(R,S)=(R^*, S^*)$ to $\mathcal R(v)$.

The case when $v$ has a unique in-neighbor $u$ in $D$ and $uv \in R'$ is captured by \textbf{Case 1} of CRC($v$). Indeed, in this case for every closed child $v_i$ of $v$, $D_i$ has the arc $vv_i$, so it witnesses that $\mathcal R(v_i)$ contains the record $(\{vv_i\},\emptyset)$. Moreover, if for some $v_i\in C$, $\mathcal V_v(\{v_iv\})$ is larger than $s_1=\mathcal V_v(\{uv\})$, then the in-degree of $v_i$ in $D_i$ must be equal to one, so $D_i$ witnesses that $(\{vv_i\},\{v_i\}) \in \mathcal R(v_i)$.
Furthermore, if $s_1<\mathcal V_v(\{wv\})$ for some $w\in N_G(v) \setminus C$, then $v$ envies $w$ in $D$, so $w$ belongs to $S'$ by Observation \ref{obs: v_envy_S'}. 
Hence, CRC($v$)adds $(R,S)$ and $(R,S \cup \{v\})$ to $\mathcal R(v)$, and one of these records is $(R^*, S^*)$. 

The next case we consider is when the unique in-neighbor of $v$ in $D$ is its closed child $v_i$, in particular $R'$ has no incoming arcs $uv$ to $v$.
We show that this situation is covered by \textbf{Case 1'} of CRC($v$), namely by the branch choosing $v_i$.
Indeed, $D_i$ witnesses that $(\{v_iv\}, \{v\}) \in \mathcal R(v_i)$. If $s_i=V_v(\{v_iv\})<V_v(\{v_jv\})$ for some other closed child $v_j$ of $v$, then the in-degree of $v_j$ in $D$ is equal to one, in particular, $D_j$ witnesses that $(\{vv_j\},\{v_j\}) \in \mathcal R(v_j)$. Moreover, for each $j \neq i$, $D_j$ witnesses that $(\{vv_j\},\emptyset) \in \mathcal R(v_j)$. If $s_i<\mathcal V_v(\{wv\})$ for some $w\in N_G(v) \setminus C$, then $v$ envies $w$ in $D$, so $w$ belongs to $S'$ by Observation \ref{obs: v_envy_S'}. Hence, CRC($v$)adds $(R,S  \cup\{v\})$ to $\mathcal R(v)$. If $S^*=S\cup \{v\}$, this is precisely the record $(R^*, S^*)$. Otherwise $S^*=S$ and $v\not \in S^*$. Then $v_i$ does not envy $v$ in $D$ (and hence in $D_i$), so $D_i$ witnesses that   
$(\{v_iv\}, \emptyset) \in \mathcal R(v_i)$. Therefore, CRC($v$) also adds $(R,S)=(R^*, S^*)$ to $\mathcal R(v)$. 

Finally, we show that the case when $v$ has more than one in-neighbor in $D$ is covered by \textbf{Case 2} of CRC($v$). Indeed, in this case, $v\not \in S^*$ and no vertices of $V_v$ envy $v$ in $D$. In particular, $v \not \in S_i$ for any $i\in [t]_0$, so $v \not \in S'$. 
For every closed child 
$v_i$ of $v$ such that $(\{v_iv\}, \emptyset) \not \in \mathcal{R}(v_i)$, we conclude that the edge $vv_i$ is oriented towards $v_i$ in $D_i$, otherwise the envy from $v_i$ towards $v$ would appear. In particular, $D_i$ witnesses that $(\{vv_i\}, \emptyset) \in \mathcal{R}(v_i)$. Moreover, by monotonicity of $\mathcal V_v$, the value of $v$ in $D$ is upper-bounded by $s=\mathcal V_v(\{wv|wv \in R'\} \cup \{v_iv|v_i\in C, (\{v_iv\}, \emptyset)\in \mathcal{R}(v_i)\}$. Since $D$ is a partial solution, for every closed child $v_i$ such that $s<V_v(\{v_iv\})$ the in-degree of $v_i$ in $D$ (and hence in $D_i$) is equal to one. Therefore, $D_i$ witnesses that $(\{vv_i\},\{v_i\}) \in \mathcal R(v_i)$. Moreover, if $s<V_v(\{vw\})$ for some $w \in N_G(v) \setminus C$, then $w\in S'$ by Observation. \ref{obs: v_envy_S'}. Hence, CRC($v$) adds $(R,S)=(R^*,S^*)$ to the record set of $v$. 
\end{proof}
\fi

\iflong 
\begin{claim}
\fi
\ifshort 
\begin{claim}[$\star$]
\fi
\label{cl:combRecords2}
If CRC($v$) adds $(R,S)$, then $(R,S) \in \mathcal R(v)$. 
\end{claim}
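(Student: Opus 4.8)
The plan is to prove the converse direction of the correctness of the record-combining procedure CRC($v$): whenever CRC($v$) adds a pair $(R,S)$, we must exhibit a partial solution $D$ at $v$ witnessing that $(R,S)$ is genuinely a record. The natural approach is constructive. Fix the branch of CRC($v$) that added $(R,S)$; this branch committed to records $(R_i,S_i)\in\mathcal R(v_i)$ for every open child $v_i$, $i\in[t]$, to an orientation $R_0$ of the edges from $V_0$ to $v$, and — depending on which of Cases 0, 1, 1', 2 fired — to an orientation of the edges $vv_i$ to each closed child together with a choice of record for that closed child. By definition of a record, each chosen $(R_i,S_i)$ has a witness $D_i$, a partial solution at $v_i$; similarly each closed child's chosen record has a witness. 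I would assemble $D$ as the union of $R_0$, all the $D_i$ for open children, all the chosen witnesses for closed children, and the committed orientations of the edges incident to $v$. Because the vertex sets $\bar V_i$ overlap only in boundary vertices and in $v$ itself, and the branch was not discarded precisely when the orientations agree on shared arcs (anti-symmetry of $R'$, the consistency check $V_i\cap S'=V_i\cap S_i$, and at most one incoming arc per vertex of $S'$), this union is a well-defined orientation of $G$ restricted to $\bar V_v$ and the edges with an endpoint in $V_v$.

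Having built $D$, I would verify the four conditions of Definition~\ref{def:record} in turn. Conditions 2 and 4 (that $R$ is exactly the arc set crossing the boundary of $V_v$, and that every $w\in S$ has in-degree one) hold almost by construction, since $R$ is the restriction of $R'$ to boundary-crossing arcs and the branch only kept vertices of $S'$ with a unique incoming arc. Condition 1 (EFX between any two vertices of $V_v$) is the heart of the argument. For two vertices lying in the same child subtree $V_i$, EFX is inherited from the witness $D_i$. For two vertices in different subtrees, or for a pair involving $v$ itself, the only shared relevant item is a common edge, so envy can only flow toward a vertex whose bundle is a single edge incident to $v$; the case analysis of CRC($v$) was designed exactly so that whenever $v$ receives a bundle of value $s$, every child or neighbor valued higher than $s$ by $v$ either has in-degree one (hence its record is of the form $(\{vv_i\},\{v_i\})$, making the envy EFX) or is placed in $S'$. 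I would check each of Cases 0, 1, 1', 2 against this requirement, using monotonicity of $\mathcal V_v$ to bound $v$'s value in Case 2 by the greedy quantity $s$. Condition 3 (envy from $V_v$ toward a boundary vertex $w$ forces $w\in S$) follows because any such $w$ is either a closed child forced into a $(\{vv_i\},\{v_i\})$ record, an open-child boundary vertex already recorded in some $S_i\subseteq S'$, or a neighbor of $v$ explicitly required to lie in $S'$ by the discard conditions; intersecting with $\delta(v)$ gives $w\in S$.

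The main obstacle I anticipate is the EFX verification for pairs straddling different subtrees together with the bookkeeping of $S$. One must argue that no spurious envy is introduced at the seams where the $D_i$ are glued, which relies on the fact (noted after Definition~\ref{def:record}) that between agents in different subtrees the only common relevant item is their shared edge, so an agent $u\in V_i$ can envy $w\in V_j$ only through the edge $uw$, and such cross edges are accounted for in the boundaries. The delicate point is handling the in-neighborhood of $v$ itself, since $v$ belongs to every child boundary simultaneously: I would treat the four cases of CRC($v$) separately, in each case reconstructing $v$'s actual in-neighbors from the committed data and confirming both that $v$ is EFX-satisfied (no overly-valuable neighbor escapes into a multi-edge bundle) and that $v\in S$ exactly when the branch added $v$ to $S$. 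Once Cases 0–2 are dispatched, conditions 1–4 all hold, so $D$ witnesses $(R,S)\in\mathcal R(v)$, completing the claim.
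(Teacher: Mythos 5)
Your proposal matches the paper's proof in both structure and substance: you construct the witness $D$ by gluing the chosen witnesses $D_i$ of the committed records (with the closed children's witnesses selected per Case 0/1/1'/2), then verify Points 1--4 of Definition~\ref{def:record}, handling cross-subtree envy via the consistency condition $V_i\cap S'=V_i\cap S_i$ together with the in-degree-one requirement, and treating $v$'s in-neighborhood by the same four-case analysis. Apart from a cosmetic slip (the envied vertex across two subtrees has a singleton bundle, not necessarily an edge incident to $v$), this is essentially the paper's argument.
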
 
\iflong 
\begin{proof}[Proof of the Claim]
Fix the branch where $(R,S)$ was added to $\mathcal R(v)$, we will construct a partial solution $D$ witnessing $(R,S)$ by gluing together partial solutions of children of $v$.  
For every $i\in[t]$, pick any witness $D_i$ of the record $(R_i, S_i)$ at $v_i$. Denote by $D_0$ the digraph on $V_0$ with the arc set $R_0$. For the closed children of $v$, we choose partial solutions as follows.

If $(R,S)$ was added in \textbf{Case 0}, for each closed child $v_i$ of $v$ we define $D_i$ to be a partial solution at $v_i$ witnessing the record $(\{vv_i\},\{v_i\})$, or $(\{vv_i\}, \emptyset)$ in case $V_v(\{v_iv\})=0$.

If $(R,S)$ was added in \textbf{Case 1}, for every closed child $v_i$ of $v$ with $\mathcal V_v(\{v_iv\})>s_1$, we pick a partial solution $D_i$ witnessing the record $(\{vv_i\},\{v_i\})$. For the rest of the closed children $v_i$, $\mathcal R(v_i)$ contains the record $(\{vv_i\},\emptyset)$ at $v_i$, let $D_i$ be the witness of this record. 

If $(R,S)$ was added in \textbf{Case 1'}, in the branch corresponding to $v_i \in C$, we define $D_i$ to be a partial solution at $v_i$ witnessing the record $(\{v_iv\},\emptyset)$, if there is such a record, or $(\{v_iv\},\{v\})$ otherwise.
For all $v_j \in C$ such that $\mathcal V_v(\{v_jv\})>\mathcal V_v(\{v_iv\})$, let $D_j$ be a partial solution at $v_j$ witnessing its record $(\{vv_j\},\{v_j\})$.
For the rest of closed children $v_j \in C$, let $D_j$ be a partial solution at $v_j$ witnessing its record $(\{vv_j\},\emptyset)$.

Finally, assume that $(R,S)$ was added in \textbf{Case 2}. For every closed child $v_i$ of $v$ such that $\mathcal{R}(v_i)$ contains the record $(\{v_iv\}, \emptyset)$, we pick a partial solution $D_i$ witnessing this record. 
For the rest of closed children $v_i \in C$, if $s<V_v(\{v_iv\})$, let $D_i$ be the partial solution at $v_i$ witnessing the record $(\{vv_i\},\{v_i\})$, and otherwise let $D_i$ be the witness of $(\{vv_i\},\emptyset)$.
\end{proof}

We obtain $D$ by gluing together all $D_i$, $i\in [c]_0$, i.e., by taking the union of their vertices and arcs. Note that if some arc belongs to more than one $D_i$, $i \in [t]_0$, then it is also present in $R'$. Since $R'$ is anti-symmetric and the union of boundaries of open children along with $V_0$ covers the boundary at $v$ by Point 3. of Observation~\ref{obs:basiclfesproperties}, $D$ defines some orientation of all the edges incident to $\bar V_v$. In particular, $R'$ contains an arc for each edge of $G$ with precisely one endpoint in $V_v$, so $R$ satisfies Point. 2 of Definition \ref{def:record} by construction.


Towards Point. 4, consider any vertex $w \in S=S' \cap \delta (v)$. If $w \in S_0$, then all the arcs of $D$ incident to $w$ belong to $R'$ and by our choice of $S_0$ in CRC($v$)there is precisely one such arc. Assume that $w \in V_i$ for some open child $v_i$ of $v$. Then $w$ also belongs to $S_i$ since CRC($v$)ensures that $S_i \cap V_i =S' \cap V_i$. Hence, the in-degree of $w$ in $D_i$ is equal to one. As $w \in V_i$, $D_i$ contains all the arcs of $D$ incident to $w$, so the in-degree of $w$ in $D$ is equal to one as well. Finally, if $w \in S_i$ for some $i \in [t]$ but $w$ does not belong to any $V_j$, $j \in [t]$, then all the arcs of $D$ incident to $w$ belong to $R'$, and $R'$ contains at most one incoming arc to each vertex in $S'$. On the other hand, $w$ belongs to $S_i$, so there should be an incoming arc to $w$ in $D_i$ by Point. 4 of Definition \ref{def:record}. Hence, the in-degree of $w$ in $D$ is equal to one. 

To check Point 1 of Definition \ref{def:record}, we will ensure that the allocation defined by $D$ is EFX when restricted to $V_v$. Since $D_i$, $i\in [c]$, are partial solutions, it is EFX between any two vertices that belong to the same $V_i$, $i\in [c]$. If $v$ envies some $v_i$ in $C$, recall that for such $v_i$ we chose $D_i$ to be a witness of $(\{vv_i\}, \{v_i\})$, which means that the in-degree of $v_i$ in $D_i$ is equal to one. Moreover, if the record was added in the branch corresponding to \textbf{Case 0} or \textbf{Case 1}, there is no envy from any closed child towards $v$, since none of their incident edges are oriented towards $v$. If there is envy from $v_i$ towards $v$ in \textbf{Case 1'}, then $v$ has no incoming arcs in $R'$, so the envy can be eliminated by removing the unique incoming arc $v_iv$ of $v$. Finally, if $(R,S)$ was added in the branch corresponding to \textbf{Case 2}, then for every closed child $v_i$ such that $v_iv$ is oriented towards $v$, recall that we chose $D_i$ as a witness of $(\{v_iv\}, \emptyset)$, so $v_i$ does not envy $v$.
Hence, the allocation is EFX between $v$ and any $v_i \in C$. 

Consider a pair of adjacent vertices $u$ and $w$ in $V_v$ such that $u \in V_i$ and $w\in V_j$ for some open children $v_i$ and $v_j$ of $v$. Assume that there is envy from $u$ towards $w$ in $D$, then this envy is present also in $D_i$. By the definition of the partial solutions, we conclude that $w$ belongs to $S_i \subseteq S'$. Since $S_j \cap V_j =S' \cap V_j$, $w$ belongs to $S_j$ as well and therefore has in-degree one in $D_j$. As $w\in V_j$, there are no incoming arcs to $w$ outside of $D_j$ and hence the in-degree of $w$ in $D$ is equal to one. Therefore, the allocation is EFX between $u$ and $w$. 

It remains to consider $v$ and any neighbor $w$ of $v$ in $V_v\setminus C$. Observe that $w\in V_i \cap \delta (v_i)$ for some $i\in [t]$. If $v$ envies $w$, CRC($v$) ensures that $w$ belongs to $S'$, otherwise, the branch would be discarded after comparing the value of $v$ with $\mathcal V_v(\{vw\})$. As $S' \cap V_i = S_i \cap V_i$, this means that $w \in S_i$, so the in-degree of $w$ in $D_i$ (and hence in $D$, as $w\in V_i$) is equal to one. Conversely, if $w$ envies $v$, then since $D_i$ is a partial solution at $v_i$, we conclude that $v\in S_i\subseteq S'$, in particular, $v$ has no incoming arcs other than $wv$ in $R'$ and $(R,S)$ was added in the branch of CRC($v$)corresponding to \textbf{Case 1}. By our choice of witnesses for the closed children, $v$ does not receive incoming arcs from any of them, so its in-degree in $D$ is equal to one. Hence, the orientation defined by $D$ is EFX if restricted to $V_v$. 

Finally, to check Point 3 of Definition \ref{def:record}, consider any pair of adjacent vertices $u$ and $w$ of $D$ such that $u\in V_v$, $w \in \delta(v)$ and $u$ envies $w$. If $u$ belongs to $V_i$ for some $i\in [t]$, then the envy is present also in $D_i$ and hence $w \in S_i \subseteq S'$. Otherwise, if $u\in V_v$ envies $w \in \delta(v)$ and $u$ does not belong to any of $V_i$, $i\in [t]$, we conclude that $u=v$. Then CRC($v$) ensures that $w$ belongs to $S'$, otherwise, the branch would be discarded after comparing the value of $v$ with $\mathcal V_v(\{vw\})$. As $w \in  \delta(v)$, it belongs to $S$. 
\fi
This concludes the proof of Lemma~\ref{lem:combRecordsnew}.
\end{proof}

To complete the proof of Theorem \ref{thm: EFX_FPT}, we first compute the
records $\mathcal R(v)$ for each leaf of $T$ via exhaustive branching. Then we apply Lemma \ref{lem:combRecordsnew} to propagate our
record sets towards the root of $T$. 


\section{Discussion and Open Problems}
\label{sec:conclusion}
The focus of this paper was the existence and complexity of EF1 and EFX orientations of goods, i.e., allocations that satisfy the corresponding fairness criterion and in addition, every agent gets items from their own predetermined set. 
In contrast to EFX orientations, which do not always exist and if they do it is almost always hard to find one, we have shown that EF1 orientations do exist and can be computed. 
Hence, EF1 orientations, in addition to fairness constraints, preserve some economic efficiency as well; however in~\cite{christodoulou2023fair} it was shown that EFX does not have the latter property, as in specific cases it {\em has} to give goods to agents that do not value them at all.

We conclude by highlighting some intriguing open problems that deserve further investigation.


    From a purely theoretical point of view, the complexity of finding an EF1 orientation is open. Can we compute such an orientation in polynomial time? We currently do not know the answer even for three agents- for two agents the problem is easy. 
    Alternatively, is the problem hard for some complexity class? Our proof indicates that the problem belongs to \PLS, however, showing hardness is an intriguing question.
    
    What about EF1 allocations that satisfy other types of constraints? Our results show existence under ``approval'' constraints for the agents. A different option would be to consider {\em cardinality} constraints for the agents, i.e., every agent should get a specific number of items. A version of this model was studied by \citeauthor{caragiannis2024repeatedly}~\shortcite{caragiannis2024repeatedly} and the existence of EF1 allocations is an open problem.
    
    The definition of EFX that both \citeauthor{christodoulou2023fair}~\shortcite{christodoulou2023fair} and this paper has adopted, assumes that the envy should be eliminated by removing any item from an envied bundle. However, someone can study the relaxed notion of EFX0; recall in EFX0 the envy should be eliminated by removing any item that changes the value of the envied bundle. 
    Our preliminary investigation shows that indeed this is a promising direction. We have identified some classes of (graph-based) monotone valuations where a natural generalization of EFX0 always exists and can be computed efficiently. This raises the natural question about for which classes of valuation functions EFX0 allocations are guaranteed to exist.

\bibliographystyle{plainnat}
\bibliography{references}

\iflong

\newpage
\appendix

\section{EFXr: EFX for relevant items}

Using the notion of relevant items we can define the following relaxation of EFX, where we remove only those items and we term it EFXr.

\begin{definition}[EFXr]
An allocation $\pi$ is envy-free up to any {\em relevant} item (EFXr), if for every pair of agents $i,j \in N$ and every item $a \in \pi_j$ that is relevant for $i$ it holds that $\mathcal V_i(\pi_i) \geq \mathcal V_i(\pi_j\setminus a)$.
\end{definition}

\subsection{Decomposable instances}
\begin{definition}
 We say that the instance $\mathcal I$ of fair allocation with agent lists $N_a$, $a\in A$, is \emph {decomposable} if for every pair $a, b$ of items it holds that either $N_a=N_b$ or $|N_a \cap N_b|=1$.   
\end{definition}
In other words, the instance is decomposable if items from different groups share at most one agent. In the following lemma, we show that to obtain EF1 (EFXr, EF) orientation of a decomposable instance, it is sufficient to compute EF1 (EFXr, EF) allocation for each item-group independently. 
 
 Let us denote by $\mathcal I^g$ the restriction of $\mathcal I$ where the agents are the same, but the only available items are from the group $g$. Valuations are naturally restricted to this set of items.

\begin{lemma}
\label{lem: decomposable}
 If $\mathcal{I}$ is decomposable instance of fair allocation with monotone valuations and EF1 (EFX, EF) allocation $\pi^g$ exists for every $\mathcal I^g$, then their union $\pi$ with $\pi_i = \bigcup_g \pi_i^g$, $i\in N$, is EF1 (EFXr, EF) allocation for $\mathcal I$.
\end{lemma}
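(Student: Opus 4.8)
The plan is to use decomposability to reduce every pairwise comparison in $\pi$ to a comparison living entirely inside a single subinstance $\mathcal I^g$, where the assumed fairness guarantee already holds. First I would make the group structure explicit: the items partition into groups according to their agent list, and I write $N_g$ for the common agent list of a group $g$, so that an item of group $g$ is relevant to an agent $i$ precisely when $i\in N_g$, and in $\pi^g$ each such item is assigned to an agent of $N_g$. In particular $\pi$ is an orientation, since each $\pi^g$ hands items of group $g$ only to agents of $N_g=N_a$.

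The crux is the following structural claim, and this is exactly where decomposability enters: for any two agents $i,j$ there is \emph{at most one} group $g$ with $\{i,j\}\subseteq N_g$. Indeed, if two distinct groups $g_1\neq g_2$ both contained $\{i,j\}$, then choosing items $a\in g_1$ and $b\in g_2$ gives $N_a\neq N_b$ while $\{i,j\}\subseteq N_a\cap N_b$, so $|N_a\cap N_b|\ge 2$, contradicting decomposability. Denote this unique group by $g_{ij}$ when it exists. The consequence I would then record is that all items of $\pi_j$ relevant to $i$ lie in the single group $g_{ij}$: agent $j$ receives only items from groups containing $j$, and such an item is relevant to $i$ only if it also contains $i$. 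Hence $\pi_j\cap A_i=\pi_j^{g_{ij}}$ (empty if $g_{ij}$ does not exist), which yields $\mathcal V_i(\pi_j)=\mathcal V_i(\pi_j^{g_{ij}})$ and, for any relevant $a$, $\mathcal V_i(\pi_j\setminus a)=\mathcal V_i(\pi_j^{g_{ij}}\setminus a)$.

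With this reduction the three cases are immediate. If $g_{ij}$ does not exist then $\mathcal V_i(\pi_j)=0\le \mathcal V_i(\pi_i)$ and there is nothing to prove. Otherwise, monotonicity gives $\mathcal V_i(\pi_i)\ge \mathcal V_i(\pi_i^{g_{ij}})$ because $\pi_i^{g_{ij}}\subseteq \pi_i$, and I would chain this with the fairness of $\pi^{g_{ij}}$ on $\mathcal I^{g_{ij}}$: for EF, $\mathcal V_i(\pi_i^{g_{ij}})\ge \mathcal V_i(\pi_j^{g_{ij}})=\mathcal V_i(\pi_j)$; for EF1, I pick the item $a\in\pi_j^{g_{ij}}$ guaranteed in the subinstance and use $\mathcal V_i(\pi_i^{g_{ij}})\ge \mathcal V_i(\pi_j^{g_{ij}}\setminus a)=\mathcal V_i(\pi_j\setminus a)$; for EFXr, the items to be removed are exactly the relevant ones, i.e. the members of $\pi_j^{g_{ij}}$, and the same inequality holds for each of them.

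The genuinely nontrivial step is the uniqueness of $g_{ij}$; everything after it is a one-line application of monotonicity plus the subinstance guarantee. The one subtlety to get right — and the reason EFX weakens to EFXr in the conclusion — is the handling of item removal: in the union, $\pi_j$ may carry items irrelevant to $i$, and deleting such an item does not lower $\mathcal V_i(\pi_j)$, so the full EFX property cannot survive the union; only the relaxation EFXr, which restricts removal to relevant items, is preserved by the above argument.
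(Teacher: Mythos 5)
Your proof is correct and follows essentially the same route as the paper's: reduce any pairwise comparison to the unique group $g$ shared by $i$ and $j$ (so that $\mathcal V_i(\pi_j)=\mathcal V_i(\pi_j^{g})$ and likewise after removing a relevant item), then combine monotonicity $\mathcal V_i(\pi_i)\ge \mathcal V_i(\pi_i^{g})$ with the fairness guarantee of $\pi^{g}$ in the subinstance. The only differences are presentational: you spell out the uniqueness of $g_{ij}$ (which the paper asserts directly from decomposability) and you make explicit the assumption, also implicit in the paper's proof, that each $\pi^g$ assigns group-$g$ items only to agents in the common agent list.
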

\begin{proof}
  Consider any pair of agents $i$ and $j$ and assume that there is envy from $i$ towards $j$, i.e $\mathcal V_i(\pi_j)>\mathcal V_i(\pi_i)$. Due to decomposability, there is at most one item group $g$ of items relevant for both $i$ and $j$. In particular, the items of $j$ outside of $g$ are not relevant for $i$ and hence $\mathcal V_i(\pi_j)=\mathcal V_i(\pi_j^g)$. Moreover, since the valuations are monotone, it holds that $\mathcal V_i(\pi_i)\ge \mathcal V_i(\pi_i^g)$. Hence $\mathcal V_i(\pi_j^g)=\mathcal V_i(\pi_j)>\mathcal V_i(\pi_i) \ge \mathcal V_i(\pi_i^g)$, so the envy from $i$ towards $j$ is present in $\pi^g$. 
  
  Consider any item $a$ such that its removal eliminates this envy in $\pi^g$, then $a$ belongs to the group $g$ and $\mathcal V_i(\pi_j^g \setminus \{a\})\leq \mathcal V_i(\pi_i^g)$. Since items of $j$ that do not belong to $g$ are not relevant for $i$, we have $\mathcal V_i(\pi_j \setminus \{a\}) = \mathcal V_i(\pi_j^g \setminus \{a\})\leq \mathcal V_i(\pi_i^g) \leq \mathcal V_i(\pi_i)$, so removal of $a$ eliminates envy from $i$ towards $j$ in $\pi$. In particular, this shows that if $\pi^g$ is EF1 between $i$ and $j$, so is $\pi$.
  
  Finally, assume that $i$ envies $j$ in $\pi$ and $\pi^g$ is EFX. Let $a$ be any item in $\pi_j$ relevant for $i$, then $a$ belongs to $\pi^g_j$. As $\pi^g$ is EFX, removal of $a$ eliminates envy from $i$ towards $j$ in $\pi^g$. Therefore, as we argued in the last paragraph, the removal of $a$ eliminates envy from $i$ towards $j$ in $\pi$. Hence, $\pi$ is \EFXr between $i$ and $j$. Applying these arguments to all pairs of agents proves the lemma.
\end{proof}

Multigraphs provide a simple example of decomposable instances: item-groups can be identified with edges, and every pair of edges share at most one vertex. Plaut and Roughgarden \cite{PlautR20} showed that EFX valuations exist for two agents with arbitrary valuations. This allows us to obtain an EFX allocation for each edge of the multigraph independently. By Lemma \ref{lem: decomposable}, gluing them together results in \EFXr orientation for the whole multigraph.

\begin{corollary}
 \EFXr orientation always exists for the multigraphs with monotone valuations. It can be computed in poly-time if an EFX allocation for each multi-edge is provided. 
\end{corollary}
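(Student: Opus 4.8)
The plan is to derive the corollary as a direct application of Lemma~\ref{lem: decomposable}, so the first task is to verify that every multigraph induces a decomposable instance. I would identify the item-groups with the edges of the underlying simple graph: all parallel edges joining a fixed pair of vertices $u,w$ share the common agent list $\{u,w\}$ and hence form one group. For any two items (edges) $a$ and $b$, their endpoints overlap in at most one vertex, so either $N_a=N_b$ (when they join the same pair) or $|N_a\cap N_b|\le 1$. Crucially, this yields the only consequence of decomposability that the proof of Lemma~\ref{lem: decomposable} actually invokes: for any two agents $i,j$, the items relevant to both are exactly the edges between $i$ and $j$, and these all lie in a single group. Thus at most one group is relevant to any fixed pair of agents, which is precisely what the lemma needs.

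Next I would treat each group in isolation. For a group $g$ consisting of the parallel edges between $u$ and $w$, the restricted instance $\mathcal I^g$ offers only these items, and they are relevant solely to $u$ and $w$; every other agent values all of them at zero. Hence, up to the irrelevant agents, $\mathcal I^g$ is a two-agent fair-division problem with monotone valuations. Invoking the result of Plaut and Roughgarden~\cite{PlautR20}, which guarantees that an EFX allocation exists for any two agents with arbitrary valuations, I obtain an EFX allocation $\pi^g$ of the group $g$ between $u$ and $w$; extending it by empty bundles for the remaining agents keeps it EFX (the others have value zero and envy no one, and $u,w$ do not envy empty bundles). Since $\pi^g$ hands edges of group $g$ only to $u$ and $w$, it is moreover an orientation of that group.

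With an EFX allocation in hand for every group, I would appeal to the EFX-to-\EFXr direction of Lemma~\ref{lem: decomposable}: the union $\pi$ defined by $\pi_i=\bigcup_g\pi_i^g$ is an \EFXr allocation of the whole multigraph. Because each $\pi^g$ is an orientation, the union allocates to every agent only edges incident to it, so $\pi$ is an \EFXr orientation, establishing existence unconditionally. For the computational claim, once an EFX allocation is supplied for each multi-edge, forming $\pi$ is merely a disjoint union over the groups, which takes time linear in the number of edges; this yields the polynomial-time bound.

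The argument is essentially bookkeeping once the two ingredients --- decomposability of multigraphs and two-agent EFX existence --- are in place, so I do not anticipate a genuine obstacle. The one point that deserves care is confirming that gluing the per-group orientations preserves the orientation constraint and that \EFXr (rather than full EFX) is the strongest guarantee transferable through the union; both are exactly the content of Lemma~\ref{lem: decomposable}, so the main work reduces to checking that multigraphs meet its hypotheses.
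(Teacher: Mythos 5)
Your proof is correct and takes essentially the same route as the paper: identify the item-groups with the multi-edges, note that any two edges share at most one vertex so the instance is decomposable, invoke the two-agent EFX existence result of Plaut and Roughgarden per group, and glue the per-group allocations via Lemma~\ref{lem: decomposable}. Your extra bookkeeping (extending each per-group EFX allocation by empty bundles and checking that the union respects the orientation constraint) merely makes explicit steps the paper leaves implicit.
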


\subsection{\EFXr orientation of faces of planar graphs}
In this section, we show the existence of \EFXr (and hence EF1) orientations for the special class of instances $\mathcal{I}$ of Fair Division when the agents can be identified with vertices of some embedded planar graph $G=G_{\mathcal I}$, while each internal face of $G$ represents a unique item that is relevant for those and only those agents which belong to the face.

\begin{theorem}
\label{thm:planargraphs}
Given an embedded planar graph $G$ with triangulated inner faces and monotone valuations, an \EFXr orientation of its faces can be obtained in polynomial time. 
\end{theorem}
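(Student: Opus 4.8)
The plan is to reduce the existence of an \EFXr orientation to a purely combinatorial statement about orienting triangular faces, and then to establish that statement by processing the triangulation in a canonical vertex order.

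First I would observe that, because the inner faces are triangles, the item associated with a face $f$ is relevant exactly to the three vertices on $f$; hence two agents $i,j$ share relevant items precisely when they are the endpoints of an edge $ij$, and the shared items are exactly the (at most two) triangles incident to $ij$. Consequently, in any orientation the only items of $\pi_j$ relevant to $i$ are the triangles incident to $ij$ assigned to $j$. Using monotonicity (so that a bundle containing no item relevant to $i$ has value $\mathcal V_i=0$, and $\mathcal V_i\ge 0$), if $j$ receives at most one triangle incident to $ij$, then for every relevant $a\in\pi_j$ the set $\pi_j\setminus a$ contains no item relevant to $i$, so $\mathcal V_i(\pi_j\setminus a)=0\le\mathcal V_i(\pi_i)$ and EFXr from $i$ to $j$ holds automatically. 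Therefore it suffices to produce an orientation with the property $(\star)$: \emph{for every interior edge $ij$, neither $i$ nor $j$ receives both triangles incident to $ij$}. Indeed $(\star)$ makes EFXr hold for both directions of every edge, while non-adjacent pairs never share relevant items.

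To establish $(\star)$ I would use a canonical (shelling) order $v_1,\dots,v_n$ of the internally triangulated graph, where each $G_k:=G[\{v_1,\dots,v_k\}]$ is an internally triangulated disk and the neighbours $w_1,\dots,w_p$ of $v_k$ in $G_{k-1}$ form a contiguous path on the outer boundary of $G_{k-1}$. Adding $v_k$ creates exactly the fan of new triangles $T_l=\{v_k,w_l,w_{l+1}\}$, $l\in[p-1]$, turns each spoke edge $v_k w_l$ ($1<l<p$) into an interior edge shared by $T_{l-1}$ and $T_l$, and turns each base edge $w_l w_{l+1}$ into an interior edge shared by $T_l$ and the already-assigned old face $T'_l$ below it. I would assign corners to $T_1,\dots,T_{p-1}$ greedily from left to right. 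When choosing $c(T_l)\in\{v_k,w_l,w_{l+1}\}$, the base edge forbids at most one value (namely $c(T'_l)$, and only if it lies in $\{w_l,w_{l+1}\}$), and the spoke edge $v_k w_l$ forbids at most one value (namely $c(T_{l-1})$, and only if it lies in $\{v_k,w_l\}$). Since at most two of the three corners are forbidden, a valid choice always exists, and each interior edge is constrained exactly once — the spokes when their right triangle is processed, the bases when $T_l$ is processed, and the extreme edges $v_k w_1,v_k w_p$ only later, when they first become interior. Hence $(\star)$ is maintained throughout.

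The main obstacle is the structural bookkeeping rather than the arithmetic: I must guarantee such a shelling order exists, which I would do by assuming without loss of generality that $G$ is $2$-connected — distinct $2$-connected blocks meet in at most one cut vertex and hence share no edge, so their orientations can be chosen independently and glued, with no interior edge split across blocks — and then invoking the canonical ordering for internally triangulated $2$-connected planar graphs. I would also verify the degenerate early steps, where a base edge bounds no old face and its constraint is vacuous. Finally, for the running time: the canonical order is computable in polynomial time, each vertex addition does $\mathcal{O}(\deg(v_k))$ work, and translating the face-to-corner assignment into the orientation $\pi$ (and checking $(\star)$) is linear, so the whole procedure runs in polynomial time.
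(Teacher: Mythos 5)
Your proof is correct, and it reaches the theorem by a genuinely different route than the paper's. Both arguments share the same starting reduction: your condition $(\star)$ is precisely condition 2 of the paper's notion of a \emph{proper} allocation (for every pair of vertices sharing inner faces, at most one common face is allocated to each vertex of the pair), and your observation that $(\star)$ suffices for \EFXr --- remove the unique shared triangle; everything left in the envied bundle is irrelevant to the envier and hence worth $0$ by monotonicity --- matches the paper's. The difference is in how the invariant is realized. The paper works top-down: it peels a boundary vertex $v$ whose removal preserves $2$-connectivity, redirects the faces at $v$ to a boundary neighbour $v'$ (the ``image'' construction), recursively allocates in the smaller graph, and lifts the allocation back; to make the lifting go through it must carry a second invariant --- every vertex receives at most two faces --- and verifying both invariants requires a fairly delicate case analysis around the face $f_0=vv'v''$. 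You instead work bottom-up along a canonical (shelling) order, which is exactly the time-reversal of the paper's peeling, and assign owners to the new fan triangles greedily; feasibility is immediate because each triangle offers three candidate corners while the base-edge and spoke-edge constraints each forbid at most one, and each interior edge is constrained exactly once, when its second incident face is decided. This dispenses with the paper's degree invariant and its case analysis entirely; the price is that you forgo the extra balancedness guarantee (at most two faces per agent) that the paper's construction yields --- your greedy can give a vertex up to about half of its fan --- but that guarantee plays no role in \EFXr. One point to make airtight is the existence of the canonical ordering for $2$-connected internally triangulated plane graphs (equivalently, a chord-free outer vertex whose deletion preserves $2$-connectivity and internal triangulation); this should be proved or cited explicitly, though the paper leans on the same structural fact when it asserts its peeling vertex exists, so you are no worse off there. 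Your block-gluing step is also sound, since every inner triangle, and both faces incident to any interior edge, lie within a single block.
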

\begin{proof}
    
Without loss of generality, we assume that $G$ is 2-connected, otherwise, we can combine allocations of its 2-connected components. We will allocate the inner faces of $G$ to its vertices so that the following conditions hold:
\begin{enumerate}
\item At most 2 faces are allocated to each vertex,
\item For every pair of vertices sharing inner faces, at most one common face is allocated to each vertex of the pair.
\end{enumerate}

We call the allocations satisfying these conditions \emph{proper}.
Assume that $\pi$ is a proper allocation.
Then, if there is envy from some vertex $v$ towards another vertex $w$, it can be eliminated by removing the unique shared inner face of $v$ and $w$ from $\pi_w$. Moreover, $\pi$ is \EFXr, since the rest of the faces in $\pi_w$ are not relevant to $v$. 

We will show how to explicitly construct proper allocations by induction on the number of vertices in $G$. 
If $G$ has only 3 vertices, we simply allocate the inner face to any of them. Assume that proper allocations can be constructed for graphs with $i$ vertices and $G$ has $i+1$ vertex. 
 
Let $v$ be the vertex of the boundary of $G$ such that $G \setminus \{v\}$ is 2-connected.
If $v$ has only two neighbors in $G$, there is a unique inner face $f$ containing $v$. We apply the inductive hypothesis to obtain the proper allocation $\pi'$ for $G\setminus {v}$ and then extend it by allocating $f$ to $v$. Otherwise,
let $v'$ be the neighbor of $v$ on the boundary of $G$. Then $v$ and $v'$ share precisely one inner face, and its third vertex $v''$ does not belong to the boundary of $G$. We denote the face by $f_0=vv'v''$.

We construct the graph $G'$ from $G$ as follows. Delete the vertex $v$ and add the edges from $v'$ to every neighbor of $v$ in $G$, unless they were connected before. We say that the face $f'$ of $G'$ is the \emph{image} of the face $f$ of $G$ if $f$ contains $v$, $f'$ contains $v'$, and the rest of their vertices coincide. Note that $f_0$ is the only one face of $G$ that is neither preserved in $G'$ nor has an image in $G'$.

By construction, $G'$ is a planar 2-connected graph on $i$ vertices with triangulated inner faces. By inductive hypothesis, there exists a proper allocation $\pi'$ of inner faces of $G'$. We use $\pi'$ to construct a proper allocation $\pi$ of inner faces of $G$. First, we let $\pi$ coincide with $\pi'$ on the faces that appear both in $G$ and $G'$. Moreover, for each inner face $f$ that has an image $f'$ in $G'$, $\pi$ allocates $f$ to the same vertex to which $f'$ is allocated by $\pi'$ unless it is $v'$. In case $\pi'(f')=v'$, we let $\pi(f)=v$. Finally, if $v$ receives less than two faces and does not receive the face $f_1\neq f_0$ containing the edge $vv''$, we set $\pi(f_0)=v$, and otherwise $\pi(f_0)=v'$. It is clear that $\pi$ respects the agent lists.

 Let us ensure that $\pi$ is proper. Note that each vertex other than $v$ and $v'$ receives at most two faces in $G'$ and hence in $G$. If $\pi(f_0)=v$, then by construction $\pi$ gives at most two faces to $v$, while $v'$ receives the same or a smaller number of faces compared to $\pi'$. If $\pi(f_0)=v'$, observe that $\pi$ gives to $v$ at most as many faces of $G$ as $\pi'$ gives to $v'$. At the same time, the number of faces allocated to $v'$ by $\pi$ is at most the same as for $\pi'$, although $v'$ receives a new face $f_0$, $\pi$ allocates to $v$ at least one face $f$ such that its image $f'$ was allocated to $v'$ by $\pi'$.

We show that the second condition holds as well. Let $u\ne v$ and $w\ne v$ be two different vertices of $G$ such that at least one of them does not belong to $f_0$. If $u$ and $w$ share faces in $G$, they share the same faces or their images in $G'$. The second condition is satisfied in $G'$ for $u$ and $w$, and hence it holds in $G$ as by constructing $\pi$ from $\pi'$ we can only decrease the numbers of allocated shared faces for $u$ and $w$. 

Now consider a pair consisting of $v$ and $u$ such that $u$ shares with $v$ two inner faces but does not belong to $f_0$. Then $v'$ and $u$ share images of these faces in $G'$. As $\pi'$ is proper, it allocates at most one of them to $v'$ and at most one to $u$. By construction, $\pi$ allocates at most one of the original faces to $v$ and $u$ correspondingly.  

It remains to ensure that the second condition holds for pairs of vertices that both belong to $f_0$ and share at least two inner faces, i.e. for $v,v''$ and $v',v''$. For $v$ and $v''$, recall that by construction only one of their shared faces $f_0,f_1$ can be allocated to $v$. Same holds for $v''$, as $\pi(f_0)\neq v''$. Moreover, for $v'$ and $v''$, if $\pi(f_0)=v$, the second condition is satisfied. Otherwise, if $\pi(f_0)=v'$, we conclude that either $\pi(f_1)=v$ or $\pi$ gives to $v$ two faces other than $f_1$. If $\pi(f_1)=v$, by construction $\pi'(f_1')=v'$, so $\pi'$ can not allocate to $v'$ its second shared with $v''$ face. Finally, if $\pi$ gives to $v$ two faces other than $f_1$, then $\pi'$ gives their images to $v'$. By the first condition, no other faces can be allocated by $\pi'$ to $v'$, in particular no faces shared with $v''$. Hence, $f_0$ is the only shared face of $v'$ and $v''$ allocated by $\pi$ to $v'$.

\end{proof}

\fi

\end{document}